\documentclass[envcountsame,envcountsect,orivec,11pt]{llncs}
\usepackage{\jobname}

\title{The Complexity of Reasoning for Fragments of Default
  Logic\thanks{A preliminary version of this paper appeared in the
    proceedings of the conference SAT'09 \cite{BMTV09a}.
    This work was supported by the German Research Foundation (Deutsche
    For\-schungs\-gemeinschaft) under grants KO 1053/5-2 and VO 630/6-1. 
}}

\author{%
  Olaf~Beyersdorff \and%
  Arne~Meier \and%
  Michael~Thomas \and%
  Heribert~Vollmer}

\institute{%
  Institut f\"ur Theoretische~Informatik, Gottfried Wilhelm Leibniz Universit\"{a}t\newline%
  Appelstr.~4, 30167~Hannover, Germany\newline% 
  \protect\url|{beyersdorff,meier,thomas,vollmer}@thi.uni-hannover.de|%
}

\begin{document}

\maketitle

\begin{abstract}
  Default logic was introduced by Reiter in 1980. In 1992, Gottlob
  classified the complexity of the extension existence problem for
  propositional default logic as $\SigmaPtwo$-complete, and the
  complexity of the credulous and skeptical reasoning problem as
  $\SigmaPtwo$-complete, resp.\  $\PiPtwo$-complete. Additionally, he
  investigated restrictions on the default rules, \emph{i.e.}, semi-normal
  default rules. Selman made in 1992 a similar approach with
  disjunction-free and unary default rules. In this paper we 
  systematically restrict the set of allowed propositional connectives.
  We give a complete complexity classification for all sets of Boolean 
  functions in the meaning of Post's lattice for all three common 
  decision problems for propositional default logic. We show that the
  complexity is a hexachotomy ($\SigmaPtwo$-, $\DeltaPtwo$-, $\NP$-, 
  $\P$-, $\NL$-complete, trivial) for the extension existence problem, 
  while for the credulous and skeptical reasoning problem we obtain 
  similar classifications without trivial cases.
  \medskip
  
  \keywordname\ computational complexity, default logic, nonmonotonic reasoning, Post's lattice
\end{abstract}

\section{Introduction}

When formal specifications are to be verified against real-world situations, 
one has to overcome the \emph{qualification problem} that denotes the impossibility 
of listing all conditions required to decide compliance with the specification.
To overcome this problem, McCarthy proposed the introduction of ``common-sense'' 
into formal logic \cite{mccarthy:80}.
Among the formalisms developed since then, Rei\-ter's default logic is one of the 
best known and most successful formalisms for modeling common-sense reasoning. 
Default logic extends the usual logical (first-order or propositional) derivations by patterns
for default assumptions. These are of the form ``in the absence of
contrary information, assume \dots''. Reiter argued that his logic is
an adequate formalization of the human reasoning under the
\emph{closed world assumption}. In fact, today default logic is used in various areas of 
artificial intelligence and computational logic. 

What makes default logic computationally presumably harder than
propositional or first-order logic is the fact that the semantics
(\emph{i.e.}, the set of consequences) of a given set of premises is
defined in terms of a fixed-point equation. The different fixed points
(known as \emph{extensions} or \emph{expansions}) correspond to
different possible sets of knowledge of an agent, based on the given
premises. 

In a seminal paper from 1992, Georg Gottlob classified the complexity of three important decision problems for default logic:
\begin{enumerate}
 \item Given a set of premises, decide whether it has an extension at all.
 \item Given a set of premises and a formula, decide whether the
   formula occurs in at least one extension (so called \emph{brave} or
   \emph{credulous reasoning}). 
 \item Given a set of premises and a formula, decide whether the
   formula occurs in all extensions (\emph{cautious} or
   \emph{skeptical reasoning}). 
\end{enumerate}
While in the case of first-order default logic, all these
computational tasks are undecidable, Gottlob proved that for
\emph{propositional default logic}, the first and second are complete
for the class $\SigmaPtwo$, the second level of the polynomial
hierarchy (Meyer-Stockmeyer hierarchy), while the third is complete for
the class $\PiPtwo$ (the class of complements of $\SigmaPtwo$ sets). 
 
In the past, various semantic and syntactic restrictions have been
proposed in order to identify computationally easier or even tractable
fragments (see, \emph{e.g.},
\cite{stillman:90,kautz-selman:91,ben-eliyahu-zohary:2002}).
 This is the
starting point of the present paper. We propose a systematic study of
fragments of default logic defined by restricting the set of allowed
propositional connectives. For instance, if we look at the fragment
where we forbid negation and the constant $\false$ and allow only conjunction and disjunction,
we show that while the first problem is trivial (there always is an extension, in fact a
unique one), the second and third problem become $\co\NP$-complete. In
this paper we look at all possible sets $B$ of propositional
connectives and study the three decision problems defined by Gottlob
when all involved formulae contain only connectives from $B$. The
computational complexity of the problems then, of course, becomes a
function of $B$. We will see that \emph{Post's lattice} of all closed
classes of Boolean functions is the right way to study all such sets
$B$. Depending on the location of $B$ in this lattice, we completely
classify the complexity of all three reasoning tasks, see
Figs.~\ref{fig:extension-existence} and
\ref{fig:credulous-skeptical-reasoning}. We will show that, depending
on the set $B$ of occurring connectives, the problem of determining the
existence of an extension is either $\SigmaPtwo$-complete,
$\DeltaPtwo$-complete, $\NP$-complete, $\P$-complete, $\NL$-complete, or trivial, 
while for the reasoning problems the trivial cases split up into 
$\co\NP$-complete, $\P$-complete, and $\NL$-complete ones
(under constant-depth reductions).
 
The motivation behind our approach lies in the hope that identifying
fragments of default logic with simpler reasoning procedures may 
help us to understand the sources of hardness for the full
problem and to locate the boundary between hard and easy
fragments. In particular, these procedures may lead to algorithms
for solving the studied problems more efficiently.

  \begin{figure}[ht]
    \centering
    \includegraphics[height=0.62\textheight]{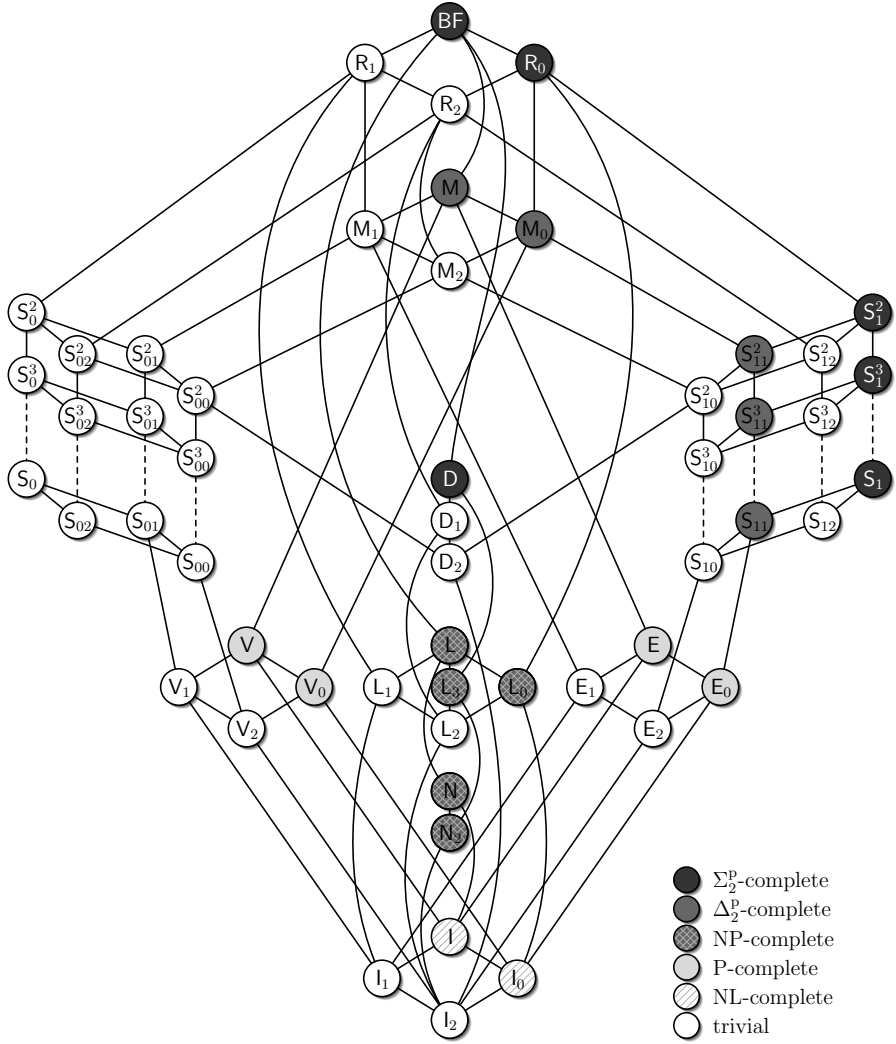}
    \caption{\label{fig:extension-existence}%
    Post's lattice. Colours indicate the complexity of $\EXT(B)$, the Extension Existence Problem for $B$-formulae.}
  \end{figure}

  \begin{figure}[ht]
    \centering
    \includegraphics[height=0.62\textheight]{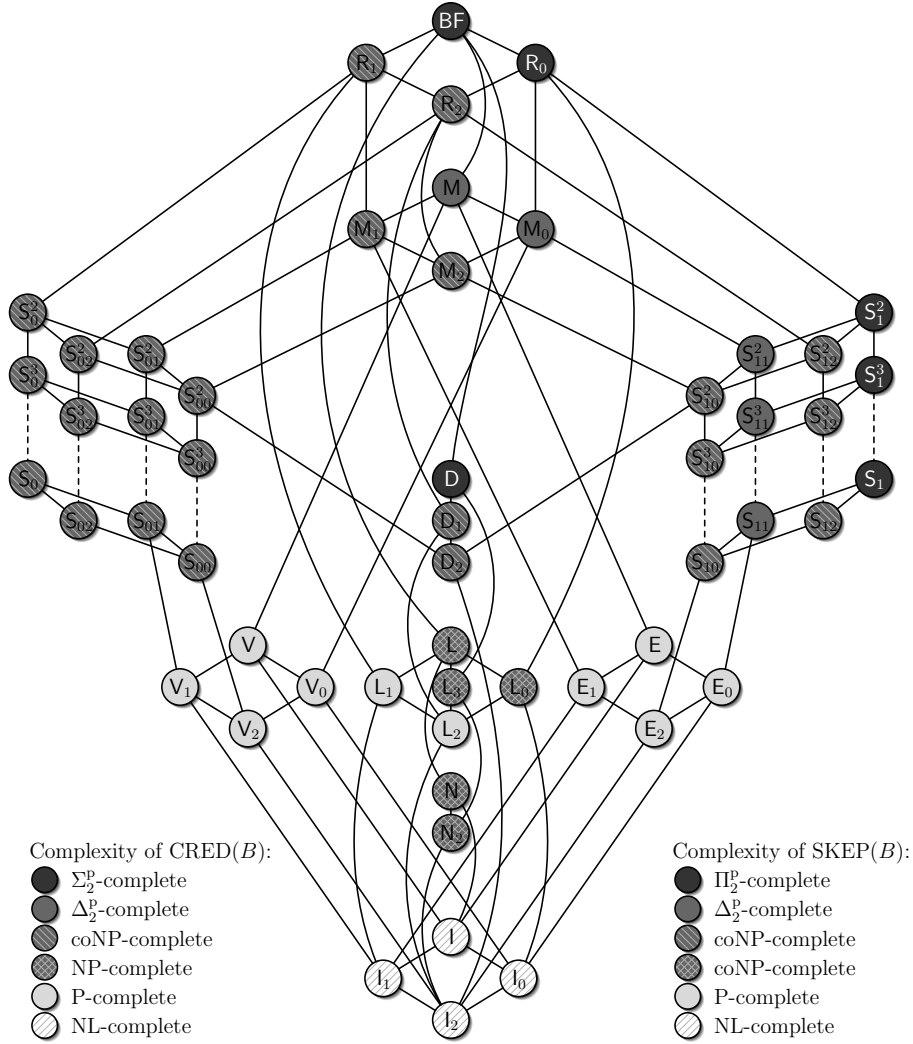}
    \caption{\label{fig:credulous-skeptical-reasoning}%
    Post's lattice. Colours indicate the complexity of $\CRED(B)$ and $\SKEP(B)$, the Credulous and Skeptical Reasoning Problems for $B$-formulae.}
  \end{figure}

  This paper is organized as follows.
  After some preliminary remarks in Section~\ref{sect:prelim}, we
  introduce Boolean clones in Section~\ref{sect:implication}. At this
  place we also provide a full classification of the
  complexity of logical implications for fragments of
  propositional logic, as this classification will serve as a central tool
  for subsequent sections. 
  In Section~\ref{sect:default_logic}, we start to investigate 
  propositional default logic.
  Section~\ref{sect:default_reasoning} then presents our main results on
  the complexity of the decision problems for default logic. 
  Finally, in Section~\ref{sec:conclusion} we conclude with a summary and a discussion.

\section{Preliminaries} \label{sect:prelim}
  
  In this paper we make use of standard notions of complexity
  theory. The arising complexity degrees encompass the classes $\NL$,
  $\P$, $\NP$, $\co\NP$, $\SigmaPtwo$ and $\PiPtwo$.
  For a thorough introduction into the field, the reader is referred to \cite{pap94}.
  For the hardness results, we use \emph{constant-depth} reductions,
  defined as follows:
  A language $A$ is \emph{constant-depth reducible} to a language
  $B$ ($A\leqcd B$) if there exists a logtime-uniform $\AC{0}$-circuit family
  $\{C_n\}_{n\geq 0}$ with unbounded fan-in $\{\land,\lor,\neg\}$-gates and oracle gates
  for $B$ such that for all $x$, $C_{|x|}(x) = 1$ if and only if $x \in A$ (cf.~\cite{vol99}).

  We assume familiarity with propositional logic. 
  The set of all propositional formulae is denoted by $\allFormulae$.
  For $A\subseteq \allFormulae$ and  $\varphi\in \allFormulae$, we
  write $A \models \varphi$ if and only if all assignments satisfying all
  formulae in $A$ also satisfy $\varphi$. 
  By $\theorems{A}$ we denote the set of all consequences of
  $A$, \emph{i.e.}, $\theorems{A}=\{ \varphi \mid A \models \varphi \}$. 
  For a literal $\ell$ and a variable $x$, we define $\sneg{\ell}$ as the literal of opposite polarity, 
  \emph{i.e.}, $\sneg{\ell}:=x$ if $\ell=\neg x$ and $\sneg{\ell}:=\neg x$ if $\ell=x$. 
  For a formula $\varphi$, let $\varphi_{[\alpha / \beta]}$ denote
  $\varphi$ with all occurrences of the formula $\alpha$ replaced by the formula $\beta$, and
  let $A_{[\alpha / \beta]} := \{\varphi_{[\alpha / \beta]} \mid
  \varphi \in A\}$ for $A\subseteq\allFormulae$.

\section{Boolean Clones and the Complexity of the Implication Problem}
\label{sect:implication}

  A propositional formula using only connectives from a finite set $B$
  of Boolean functions is called a $B$-formula. 
  The set of all $B$-formulae is denoted by $\allFormulae(B)$.
  In order to cope with the infinitely many finite sets $B$ of Boolean functions, we require 
  some algebraic tools to classify the complexity of the infinitely many arising reasoning problems. 
  A \emph{clone} is a set $B$ of Boolean functions that is closed under superposition, 
  \emph{i.e.}, $B$ contains all projections and is closed under arbitrary
  composition \cite[Chapter~1]{pip97b}.
  For an arbitrary set $B$ of Boolean functions, 
  we denote by $[B]$ the smallest clone containing $B$ and call $B$ a \emph{base} for $[B]$.
  In \cite{pos41} Post classified the lattice of all clones and found a finite base for each clone, 
  see Fig.~\ref{fig:extension-existence}.
  In order to introduce the clones relevant to this paper, we define the following notions
  for $n$-ary Boolean functions $f$:  
    \begin{itemize} \itemsep 0pt 
      \item $f$ is \emph{$c$-reproducing} if $f(c, \ldots , c) = c$, $c \in \{\false,\true\}$.
      \item $f$ is \emph{monotone} if $a_1 \leq b_1, a_2 \leq b_2, \ldots , a_n \leq b_n$ implies $f(a_1, \ldots , a_n) \leq f(b_1, \ldots , b_n)$.
      \item $f$ is \emph{$c$-separating} if there exists an $i \in \{1, \ldots , n\}$ such that $f(a_1, \ldots , a_n) = c$ implies $a_i = c$, $c \in \{\false,\true\}$.
      \item $f$ is \emph{self-dual} if $f \equiv \dual{f}$, where $\dual{f}(x_1, \ldots , x_n) = \neg f(\neg x_1, \ldots , \neg x_n)$.
      \item $f$ is \emph{linear} if $f \equiv x_1 \xor \cdots \xor x_n
        \xor c$ for a constant $c \in \{0, 1\}$ and variables $x_1,
        \ldots , x_n$. 
    \end{itemize}
    The clones relevant to this paper are listed in Table~\ref{tab:clones}. 
    The definition of all Boolean clones can be found, \emph{e.g.}, in \cite{bcrv03}. 
    
    \begin{table}[tb]
      \centering
      
      \begin{tabular}{c|@{\quad}l|@{\quad}l}
        Name & Definition & Base \\
        \hline
        \rule{0mm}{2.7ex}$\CloneBF$ & All Boolean functions & $\{\land, \neg\}$ \\
        $\CloneR_0$ & $\{f : f \text{ is $0$-reproducing}\}$ & $\{\land, \not\limplies \}$ \\
        $\CloneR_1$ & $\{f : f \text{ is $1$-reproducing}\}$ & $\{\lor, \limplies \}$ \\
        $\CloneM$ & $\{f : f \text{ is monotone}\}$ & $\{\lor, \land, \false, \true\}$ \\
        $\CloneS_0$ & $\{f : f \text{ is $0$-separating}\}$ & $\{\limplies\}$ \\
        $\CloneS_1$ & $\{f : f \text{ is $1$-separating}\}$ & $\{\not\limplies \}$ \\
        $\CloneS_{00}$ & $\CloneS_0 \cap \CloneR_0 \cap \CloneR_1 \cap \CloneM$ & $ \{x\! \lor (y \!\land\! z)\} $ \\
        $\CloneS_{10}$ & $\CloneS_1 \cap \CloneR_0 \cap \CloneR_1 \cap \CloneM$ & $ \{x \!\land (y \!\lor\! z)\} $ \\
        $\CloneS_{11}$ & $\CloneS_1 \cap \CloneM$ & $ \{x \!\land (y \!\lor\! z),\false\} $ \\
        $\CloneD$ & $\{f : f \text{ is self-dual}\}$ & $\{ (x\! \land\! \overline{y}) \lor (x\! \land\! \overline{z}) \lor (\overline{y}\! \land\! \overline{z}) \}$ \\
        $\CloneD_2$ & $\CloneD \cap \CloneM$ & $ \{(x\!\land\! y) \lor (y\!\land\! z) \lor (x\!\land\! z)\}$ \\
        $\CloneL$ & $\{f : f \text{ is linear}\}$ & $\{ \xor,\true\}$ \\
        $\CloneL_0$ & $\CloneL \cap \CloneR_0$ & $\{ \xor\}$ \\
        $\CloneL_1$ & $\CloneL \cap \CloneR_1$ & $\{ \equiv\}$ \\
        $\CloneL_2$ & $\CloneL \cap \CloneR_0 \cap \CloneR_1$ & $\{ x \!\xor \!y \!\xor \!z\}$ \\
        $\CloneL_3$ & $\CloneL \cap \CloneD$ & $\{ x \!\xor\! y \!\xor\! z,  \neg \}$ \\
        $\CloneV$ & $\{f : f \equiv c_0 \lor \bigvee_{i=1}^n c_ix_i \text{ where the $c_i$s are constant}\}$ & $\{ \lor, \false,\true \}$ \\
        $\CloneV_2$ & $[\{\lor\}]$ & $\{ \lor \}$ \\
        $\CloneE$ & $\{f : f \equiv c_0 \land \bigwedge_{i=1}^n c_ix_i \text{ where the $c_i$s are constant}\}$ & $\{ \land, \false, \true \}$ \\
        $\CloneE_2$ & $[\{\land\}]$ & $\{ \land \}$ \\
        $\CloneN$ & $\{f : f \text{ depends on at most one variable}\}$ & $\{ \neg,\false,\true\}$ \\
        $\CloneN_2$ & $[\{\neg\}]$ & $\{ \neg\}$ \\
        $\CloneI$ & $\{f : f \text{ is a projection or a constant}\}$ & $\{\id, \false,\true\}$ \\
        $\CloneI_2$ & $[\{id\}]$ & $\{\id \}$ \\
      \end{tabular}
      
      \medskip
      \caption{
        \label{tab:clones}
        A list of Boolean clones with definitions and bases.
      }
    \end{table}

  For a finite set $B$ of Boolean functions,
  we define the \emph{Implication Problem} for $B$-formulae $\IMP(B)$
  as the following computational task: Given a set $A$ of $B$-formulae
  and a $B$-formula $\varphi$, decide whether $A\models \varphi$ holds.
  The complexity of the implication problem is classified in \cite{bemethvo08imp}.
  The results relevant to this paper are summarized in the following theorem.
  
  \begin{theorem}[{\cite[Theorem 4.1]{bemethvo08imp}}] \label{thm:implication}
    Let $B$ be a finite set of Boolean functions. Then $\IMP(B)$ is 
    \begin{enumerate}
      \item 
      $\co\NP$-complete if $\CloneS_{00} \subseteq [B]$, $\CloneS_{10} \subseteq [B]$ or $\CloneD_2 \subseteq [B]$, and  
      \item 
      in $\P$ for all other cases.
    \end{enumerate}
  \end{theorem}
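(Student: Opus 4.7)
I would traverse Post's lattice, treating the upper and lower bounds separately and then matching them.

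\textbf{Upper bounds.} Membership in $\co\NP$ for every finite $B$ is immediate: to certify $A \not\models \varphi$, nondeterministically guess an assignment that satisfies every formula of $A$ while falsifying $\varphi$, and evaluate in polynomial time. For the tractable cases, inspection of Post's lattice shows that the clones $[B]$ containing none of $\CloneS_{00}$, $\CloneS_{10}$, $\CloneD_2$ are precisely those included in one of $\CloneE$, $\CloneV$, $\CloneL$, or $\CloneN$ (together with their subclones). For each of these families I would give a dedicated polynomial-time algorithm: in $\CloneE$ every formula is essentially a conjunction of literals and constants, so implication reduces to a set-inclusion test with a simple consistency check; the argument for $\CloneV$ is dual; in $\CloneL$ every formula is affine, and implication is decided by Gaussian elimination on the induced system of $\xor$-equations; and $\CloneN$-formulae depend on at most one variable, which is trivial.

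\textbf{Lower bounds at the boundary clones.} For $\co\NP$-hardness I would first establish hardness for three ``minimal'' bases, one for each of $\CloneS_{00}$, $\CloneS_{10}$, $\CloneD_2$, and then lift it to every $B$ with $[B]$ containing the respective clone. For $\CloneS_{00}$ with base $\{x \lor (y \land z)\}$ I would reduce from $3$-UNSAT, mimicking negation by an auxiliary variable $x'$ for every $x$ and enforcing the intended link between $x$ and $x'$ on the premise side, exploiting the asymmetry between premises and conclusion (negation cannot be expressed in a purely monotone fragment, but one polarity can be simulated on one side of $\models$). The treatment of $\CloneS_{10}$ is entirely dual via $\{x \land (y \lor z)\}$, and for $\CloneD_2$ one proceeds through the ternary majority function, with variable duplications that respect self-duality.

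\textbf{Lifting and main obstacle.} The final step is to propagate the hardness to an arbitrary $B$ whose clone contains one of $\CloneS_{00}, \CloneS_{10}, \CloneD_2$. By definition each function in $[B]$ is realised by some $B$-formula, but naive representations can be of exponential size; the main difficulty is to obtain polynomial-size, in fact constant-depth, simulations of the base operations so that the $\leqcd$-reduction goes through. My plan is to introduce fresh variables tied, via additional premises built in whichever encoding the clone at hand permits, to intermediate subformulae of the simulation, thereby keeping the reduction small while preserving $\models$. Verifying that such an encoding is uniformly available in every super-clone is the step I expect to be the main technical obstacle; with it in place, the lower bounds meet the $\co\NP$ upper bound and the dichotomy is complete.
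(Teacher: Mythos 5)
First, note that the paper does not prove this statement at all: Theorem~\ref{thm:implication} is imported verbatim from \cite{bemethvo08imp}, so there is no in-paper proof to compare your attempt against. Judged on its own, your plan has the right skeleton: the partition of Post's lattice is correct (the clones containing none of $\CloneS_{00}$, $\CloneS_{10}$, $\CloneD_2$ are exactly those below $\CloneE$, $\CloneV$ or $\CloneL$; mentioning $\CloneN$ separately is redundant since $\CloneN\subseteq\CloneL$), the $\co\NP$ upper bound is immediate, and the three polynomial-time procedures for conjunctive, disjunctive and affine formulae are standard and correct.

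The genuine gaps are in the hardness part, in two places. (i) In your $\CloneS_{00}$ reduction from $3$-UNSAT, the premises can indeed be the positive clauses of $\varphi_{[\neg x_1/x'_1,\ldots,\neg x_n/x'_n]}$ together with the formulae $x_i\lor x'_i$ (each of these is $0$-separating, so listing them as separate premises is fine), but the natural conclusion $\bigvee_{i}(x_i\land x'_i)$ is \emph{not} $0$-separating and hence is not an $\CloneS_{00}$-formula at all; the dual obstruction hits $\CloneS_{10}$ and $\CloneD_2$. You need a guard: a fresh variable $u$ and the conclusion $u\lor\bigvee_i(x_i\land x'_i)$, using that $A\models u\lor\psi$ iff $A\models\psi$ when $u$ occurs nowhere else; dually, a fresh conjunct on the premise side for $\CloneS_{10}$. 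Without this the formulae you write down do not belong to the fragment you are proving hard. (ii) The lifting step you yourself flag as the main obstacle is where your proposed fix fails: tying a fresh variable to an intermediate subformula requires a biconditional, and in these clones you can only express one implication direction, whose adequacy depends on the polarity of the occurrence\,---\,in particular it breaks for subformulae of the conclusion. The standard resolution is simpler: the only unbounded-arity connectives in the reductions are $\land$ and $\lor$, so one re-brackets them into balanced trees of logarithmic depth before substituting the constant-size $B$-representations of the base functions, exactly as done at the end of the proof of Lemma~\ref{lem:extension-existence-DeltaP2}; the constant $\true$ is then removed as in Lemma~\ref{lem:cloneB-equivcd-cloneB+1}, and $\false$ via the guard-disjunct trick above.
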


\section{Default Logic} \label{sect:default_logic}
  
  Fix some finite set $B$ of Boolean functions and let $\alpha, \beta,\gamma$ be propositional $B$-formulae. A \emph{$B$-default (rule)} is an expression $d=\frac{\alpha:\beta}{\gamma}$; $\alpha$ is called \emph{prerequisite}, $\beta$ is called \emph{justification} and $\gamma$ is called \emph{consequent} of $d$. A \emph{$B$-default theory} is a pair $\encoding{W,D}$, where $W$ is a set of propositional $B$-formulae and $D$ is a finite set of $B$-default rules.
  Henceforth we will omit the prefix ``$B$-'' if $B= \CloneBF$ or the meaning is clear from the context.
  
  For a given default theory $\encoding{W,D}$ and a set of formulae $E$, let $\Gamma(E)$ be the smallest set of formulae such that 
  \begin{enumerate}
    \item \label{def_gamma_1} 
      $W\subseteq \Gamma(E)$,
    \item \label{def_gamma_2} 
      $\Gamma(E)$ is closed under deduction, \emph{i.e.}, $\Gamma(E)=\theorems{\Gamma(E)}$, and
    \item \label{def_gamma_3}
      for all defaults $\frac{\alpha:\beta}{\gamma} \in D$ with
      $\alpha \in \Gamma(E)$ and $\neg \beta \notin E$, it holds that
      $\gamma \in \Gamma(E)$. 
  \end{enumerate}
  A \emph{(stable) extension} of $\encoding{W,D}$ is a fix-point of $\Gamma$, \emph{i.e.}, a set $E$ such that $E=\Gamma(E)$.

  The following theorem by Reiter provides an alternative characterization of extensions:
  
  \begin{theorem}[{\cite{reiter:80}}] \label{thm:reiter-extensions}
    Let $\encoding{W,D}$ be a default theory and $E$ be a set of formulae. 
    \begin{enumerate}
      \item \label{thm:reiter-extensions-iterative}
      Let $E_0=W$ and
      $
        E_{i+1}= \theorems{E_i} \cup \big\{\gamma \,\big\arrowvert\, \frac{\alpha : \beta}{\gamma} \in D, \alpha \in E_i \text{ and } \neg \beta \notin E \big\}.
      $
      Then $E$ is a stable extension of $\encoding{W,D}$ if and only if $E= \bigcup_{i \in \N} E_i$.
      
      \item \label{thm:reiter-extensions-generating-defaults}
      Let $G= \big\{ \frac{\alpha:\beta}{\gamma} \in D \,\big\arrowvert\, \alpha \in E \text{ and } \neg \beta \notin E  \big\}$. If $E$ is a stable extension of $\encoding{W,D}$, then 
      \[
        \textstyle E = \theorems{W \cup \{\gamma \mid \frac{\alpha : \beta}{\gamma}  \in G\}}.
      \] 
      In this case, $G$ is also called the set of \emph{generating defaults} for $E$.
    \end{enumerate}
  \end{theorem}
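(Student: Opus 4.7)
The plan is to first establish the key lemma that $\Gamma(E) = \bigcup_{i\in\N} E_i$ for every set of formulae~$E$, from which part~(1) follows immediately: $E$ is a stable extension iff $E=\Gamma(E)$ iff $E = \bigcup_i E_i$. Part~(2), the generating-defaults characterization, is then deduced from part~(1) by a straightforward induction on the stage index~$i$.

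To prove the lemma, set $F := \bigcup_{i\in\N} E_i$. I first show $F \subseteq \Gamma(E)$ by induction on~$i$: the base case $E_0 = W \subseteq \Gamma(E)$ is the first defining property of $\Gamma(E)$, and the induction step $E_{i+1} \subseteq \Gamma(E)$ combines the closure of $\Gamma(E)$ under deduction with its closure under default application, both applied to $E_i \subseteq \Gamma(E)$. For the reverse inclusion $\Gamma(E) \subseteq F$, I verify that $F$ itself satisfies the three defining conditions of $\Gamma(E)$; since $\Gamma(E)$ is the smallest such set, this yields $\Gamma(E) \subseteq F$. The first condition is immediate from $E_0 = W$, and the default-closure condition holds because if $\alpha \in F$ then $\alpha \in E_i$ for some~$i$, whence $\gamma \in E_{i+1} \subseteq F$ whenever $\neg\beta \notin E$. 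The delicate point is closure under deduction: given $F \models \varphi$, compactness of propositional logic yields a finite $S \subseteq F$ with $S \models \varphi$, and since the $E_i$ form an ascending chain, $S \subseteq E_i$ for some~$i$, so $\varphi \in \theorems{E_i} \subseteq E_{i+1} \subseteq F$.

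For part~(2), assume $E$ is a stable extension and set $F := \theorems{W \cup \{\gamma \mid \frac{\alpha:\beta}{\gamma} \in G\}}$. The inclusion $F \subseteq E$ holds because $W \subseteq E$, each consequent of a default in~$G$ lies in $E = \Gamma(E)$ by the default-closure condition, and $E$ is deductively closed. For the reverse inclusion $E \subseteq F$, part~(1) gives $E = \bigcup_i E_i$, and I show $E_i \subseteq F$ by induction on~$i$: in the step, any default $\frac{\alpha:\beta}{\gamma}$ triggered at stage~$i$ has $\alpha \in E_i \subseteq E$ and $\neg\beta \notin E$ by construction, hence it belongs to~$G$ and so $\gamma \in F$; the $\theorems{E_i}$ part of $E_{i+1}$ is absorbed by the deductive closure of~$F$.

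The main obstacle is the compactness step used to verify closure of $F$ under deduction; once this is handled cleanly, everything else is bookkeeping with the inclusions between the $E_i$, $\Gamma(E)$, and~$E$, and the observation that the definition of~$G$ faithfully captures exactly those defaults whose prerequisites are derived at some stage~$E_i$.
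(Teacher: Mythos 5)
The paper does not prove this statement at all---it is imported verbatim from Reiter's 1980 paper as a cited theorem---so there is no in-paper proof to compare against. Your argument is correct and is essentially Reiter's original one: the key lemma $\Gamma(E)=\bigcup_{i\in\N}E_i$ is established by a two-way inclusion (induction on $i$ for one direction; verifying the three defining properties of $\Gamma(E)$ for $\bigcup_i E_i$, with compactness handling deductive closure, for the other), and both parts of the theorem then follow by the bookkeeping you describe. The one point worth making explicit in a written version is that the $E_i$ form an ascending chain because $E_i\subseteq\theorems{E_i}\subseteq E_{i+1}$, which is what licenses both the compactness step and the claim $\alpha\in E_i\subseteq E$ in part~(2); otherwise the proof is complete.
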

  
  Observe that, as an immediate consequence of Theorem~\ref{thm:reiter-extensions}, stable extensions possess polynomial-sized witnesses, 
  namely the set of their generating defaults. Moreover, note that stable extensions need not be consistent. However, the
  following proposition shows that this only occurs if the set $W$ is
  already inconsistent. 
  \begin{proposition}[{\cite[Corollary 3.60]{matr93}}]
    Let $\encoding{W,D}$ be a default theory. Then $\allFormulae$ is a stable extension of $\encoding{W,D}$ if and only if $W$ is inconsistent.
  \end{proposition}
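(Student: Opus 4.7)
The plan is to prove both directions directly from the definition of $\Gamma(E)$, exploiting the trivial fact that $\neg\beta\in\allFormulae$ for every formula $\beta$, which makes the default-firing clause vacuous as soon as $E=\allFormulae$.

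For the ``if'' direction I would assume $W$ inconsistent, so $\theorems{W}=\allFormulae$. Then the set $E=\allFormulae$ trivially contains $W$, is deductively closed, and the closure condition~(\ref{def_gamma_3}) is satisfied vacuously, because no default can have $\neg\beta\notin\allFormulae$. Minimality of $\Gamma(\allFormulae)$ together with deductive closure forces $\Gamma(\allFormulae)=\theorems{W}=\allFormulae$, so $\allFormulae$ is a fixed point of $\Gamma$. For the converse I would observe that the same vacuity argument applied to \emph{arbitrary} $W$ shows that $\theorems{W}$ itself already satisfies the three closure conditions defining $\Gamma(\allFormulae)$, whence $\Gamma(\allFormulae)\subseteq\theorems{W}$; conversely $\theorems{W}\subseteq\Gamma(\allFormulae)$ by conditions~(\ref{def_gamma_1}) and~(\ref{def_gamma_2}). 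If $\allFormulae$ is a stable extension, this chain of inclusions collapses to $\allFormulae=\theorems{W}$, which means $W$ is inconsistent.

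I do not expect any genuine obstacle: the whole argument is a short unwinding of the definition, and the conceptual content is simply that the consistency test $\neg\beta\notin E$ is unsatisfiable precisely when $E$ is the full language. An alternative, equally short derivation goes via the iterative characterization in Theorem~\ref{thm:reiter-extensions}(\ref{thm:reiter-extensions-iterative}): under the assumption $E=\allFormulae$ the recursion degenerates to $E_{i+1}=\theorems{E_i}$, so $\bigcup_i E_i=\theorems{W}$, which equals $\allFormulae$ if and only if $W$ is inconsistent.
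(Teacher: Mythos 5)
Your proof is correct. Note that the paper does not prove this proposition at all---it is imported as a citation to Marek and Truszczy\'nski \cite{matr93}---and your argument (condition~(\ref{def_gamma_3}) is vacuous when $E=\allFormulae$ since $\neg\beta\in\allFormulae$ always, hence $\Gamma(\allFormulae)=\theorems{W}$, which equals $\allFormulae$ exactly when $W$ is inconsistent) is the standard short derivation one would expect, and your alternative route via Theorem~\ref{thm:reiter-extensions}(\ref{thm:reiter-extensions-iterative}) is equally sound.
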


  As a consequence we obtain:
  \begin{corollary} \label{corr_cons_ext}
    Let $\encoding{W,D}$ be a default theory. 
    \begin{itemize}
      \item If $W$ is consistent, then every stable extension of $\encoding{W,D}$ is consistent.
      \item If $W$ is inconsistent, then $\encoding{W,D}$ has a stable
        extension.
    \end{itemize}
  \end{corollary}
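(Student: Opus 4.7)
The plan is to derive both items directly from the preceding proposition, which characterizes inconsistent stable extensions as exactly those arising when $W$ is inconsistent, together with the observation that a stable extension is closed under logical consequence (Theorem~\ref{thm:reiter-extensions}).

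For the first item, I would argue by contraposition. Assume some stable extension $E$ of $\encoding{W,D}$ is inconsistent. Since any extension is deductively closed by condition~\ref{def_gamma_2} in the definition of $\Gamma$ (or equivalently by the iterative characterization in Theorem~\ref{thm:reiter-extensions}\eqref{thm:reiter-extensions-iterative}, where each $E_{i+1}$ already contains $\theorems{E_i}$), and since any deductively closed inconsistent set of propositional formulae must contain every formula, we get $E=\allFormulae$. The proposition then forces $W$ to be inconsistent, contradicting the assumption of the first item.

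For the second item, the implication is essentially immediate from the ``if'' direction of the proposition: when $W$ is inconsistent, $\allFormulae$ itself is a stable extension of $\encoding{W,D}$, so such an extension exists.

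There is no real obstacle here; the main thing to be careful about is to cite the right clause of Theorem~\ref{thm:reiter-extensions} to justify that any stable extension is closed under $\Th(\cdot)$, which is what lets us upgrade ``$E$ inconsistent'' to ``$E=\allFormulae$''. Everything else is a one-line invocation of the previously stated proposition.
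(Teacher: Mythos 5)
Your proposal is correct and matches the argument the paper intends (the corollary is stated as an immediate consequence of the preceding proposition, with no written proof): the first item follows by contraposition via deductive closure of extensions (an inconsistent, deductively closed $E$ equals $\allFormulae$, forcing $W$ inconsistent), and the second item is the ``if'' direction of the proposition. Nothing is missing.
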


  The main reasoning tasks in nonmonotonic logics give rise to the
  following three decision problems: 
  \begin{enumerate}
    \item
     the \emph{Extension Existence Problem}
    \dproblemdef{$\EXT(B)$}
        {a $B$-default theory $\encoding{W,D}$}
        {Does $\encoding{W,D}$ have a stable extension?}
    \item 
    the \emph{Credulous Reasoning Problem}
    \dproblemdef{$\CRED(B)$}
        {a $B$-formula $\varphi$ and a $B$-default theory $\encoding{W,D}$}
        {Is there a stable extension of $\encoding{W,D}$ that includes $\varphi$?}
    \item 
    the \emph{Skeptical Reasoning Problem}
    \dproblemdef{$\SKEP(B)$}
        {a $B$-formula $\varphi$ and a $B$-default theory $\encoding{W,D}$}
        {Does every stable extension of $\encoding{W,D}$ include $\varphi$?}
  \end{enumerate}
  
  The next theorem follows from \cite{gottlob:92} and states the complexity of the above decision problems for the general case $[B]=\CloneBF$.
  
  \begin{theorem} \label{thm:gottlob}
    Let $B$ be a finite set of Boolean functions such that $[B]=\CloneBF$. Then $\EXT(B)$ and $\CRED(B)$ are $\SigmaPtwo$-complete, whereas $\SKEP(B)$ is $\PiPtwo$-complete.
  \end{theorem}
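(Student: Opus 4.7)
The plan is to establish matching upper and lower bounds, both of which inherit heavily from Gottlob's original proof~\cite{gottlob:92} and merely require standard clone-theoretic machinery to handle an arbitrary base $B$ of $\CloneBF$.

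\emph{Upper bounds.} The $\SigmaPtwo$ upper bound for $\EXT(B)$ and $\CRED(B)$ follows from the polynomial-size witness guaranteed by Theorem~\ref{thm:reiter-extensions}(\ref{thm:reiter-extensions-generating-defaults}). My procedure nondeterministically guesses a set $G \subseteq D$ of generating defaults, considers the implicitly defined candidate extension $E := \theorems{W \cup \{\gamma \mid \frac{\alpha:\beta}{\gamma} \in G\}}$, and verifies in $\P^{\NP}$ that (i) $\alpha \in E$ and $\neg\beta \notin E$ for every $d \in G$, (ii) $\alpha \notin E$ or $\neg\beta \in E$ for every $d \in D \setminus G$, and, in the case of $\CRED(B)$, that additionally $\varphi \in E$. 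Each positive membership query ``$\psi \in E$'' is an instance of $\IMP(B)$ and hence in $\co\NP$ by Theorem~\ref{thm:implication}; each negative query lies in $\NP$. The whole procedure thus runs in $\NP^{\NP} = \SigmaPtwo$. For $\SKEP(B)$, a symmetric guess-and-verify argument on the complement (``some extension omits $\varphi$'') yields the $\PiPtwo$ bound.

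\emph{Lower bounds.} Gottlob's construction gives $\SigmaPtwo$-hardness of the extension existence and credulous reasoning problem and $\PiPtwo$-hardness of the skeptical one when all Boolean connectives are admitted. To transfer his reductions to an arbitrary base $B$ of $\CloneBF$, I would apply a uniform substitution: since $\{\land,\neg\}\subseteq [B]$, there exist fixed $B$-formulae $t_\land(x,y)$ and $t_\neg(x)$ representing $\land$ and $\neg$, whose size depends only on $B$. Rewriting every formula in Gottlob's reduction by replacing each occurrence of $\land$ and $\neg$ by the corresponding $B$-term produces an equivalent $B$-default theory whose stable extensions correspond exactly to those of the original theory, realized by a constant-depth reduction.

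\emph{Main obstacle.} Two technicalities must be addressed in the substitution step. First, the terms $t_\land$ and $t_\neg$ may mention the constants $\true$ or $\false$ even when these do not lie in $B$; the standard remedy is to introduce two fresh propositional variables $v_\true,v_\false$, pin $v_\true$ by adding it to $W$ and $\neg v_\false$ by a trivial default whose consequent asserts it, and substitute these variables for the constants throughout. Second, if a variable of $t_\land$ or $t_\neg$ occurs more than once, iterated substitution can blow up the formula size exponentially; the usual workaround, compatible with constant-depth reductions, is to introduce a fresh atom for each subformula and add the corresponding biconditional (itself a constant-size $B$-formula, since $[B] = \CloneBF$) to $W$. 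Neither adjustment affects the correspondence between stable extensions modulo the auxiliary atoms, so Gottlob's hardness statements carry over to $\EXT(B)$, $\CRED(B)$, and $\SKEP(B)$ without further conceptual work.
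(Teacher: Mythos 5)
Your lower-bound argument is workable but heavier than it needs to be, and heavier than the paper's: since $[B]=\CloneBF$, the paper simply invokes Lewis's observation \cite{lew79} that there exist $B$-formulae equivalent to $x\land y$, $x\lor y$ and $\neg x$ in which each of $x$ and $y$ occurs \emph{at most once}. Direct substitution into Gottlob's reduction then incurs only a linear size increase, so neither your auxiliary variables $v_\true,v_\false$ (unnecessary anyway, since a base of $\CloneBF$ expresses every Boolean function without constants) nor the Tseitin-style definitional atoms are needed. Your definitional-extension route could be completed, but it obliges you to prove that adding the biconditionals to $W$ is conservative with respect to stable extensions of the rewritten theory, which you assert rather than argue.

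The genuine gap is in your upper bound. Conditions (i) and (ii) only verify that $G$ coincides with the set of defaults whose prerequisite lies in $E=\theorems{W\cup\{\gamma\mid \frac{\alpha:\beta}{\gamma}\in G\}}$ and whose negated justification does not; that is the \emph{necessary} condition of Theorem~\ref{thm:reiter-extensions}\,(\ref{thm:reiter-extensions-generating-defaults}), not a sufficient one, because it does not enforce groundedness. Take $W=\emptyset$, $D=\{\frac{p:p}{p}\}$ and guess $G=D$: then $E=\theorems{\{p\}}$ satisfies $p\in E$ and $\neg p\notin E$, so your test accepts $E$ as a stable extension, yet the unique stable extension of this theory is $\theorems{\emptyset}$ (the default can never fire in the iterative construction). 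Your $\CRED$ procedure would therefore wrongly accept the query $p$ on this instance. The repair is to verify the guess against the iterative characterization of Theorem~\ref{thm:reiter-extensions}\,(\ref{thm:reiter-extensions-iterative}): compute the stages $E_0,E_1,\dots$ relative to the guessed $E$, using the $\NP$ oracle for the implication tests, and check that their union equals $E$ — exactly the scheme the paper uses in its membership arguments in Lemmas~\ref{lem:extension-existence-DeltaP2} and~\ref{lem:extension-existence-NP}. With that change the $\SigmaPtwo$ and $\PiPtwo$ upper bounds go through; the paper itself sidesteps the issue by noting that Gottlob's upper-bound proofs are independent of the connectives allowed.
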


  \begin{proof}
    The upper bounds given in \cite{gottlob:92} do not depend on the
    Boolean connectives allowed and thus hold for any finite set $B$
    of Boolean functions. For $\SigmaPtwo$- and $\PiPtwo$-hardness, it suffices to 
    note that if $[B] = \CloneBF$, then there exist $B$-formulae $f(x,y)$, $g(x,y)$ and $h(x)$
    such that $f(x,y)\equiv x \land y$, $g(x,y)\equiv x \lor y$, $h(x) \equiv \neg x$ 
    and both $x$ and $y$ occur at most once in $f$, $g$, and $h$
    \cite{lew79}. Hence, the hardness results generalize to arbitrary bases $B$ with $[B]=\CloneBF$. 
  \end{proof}

\section{The Complexity of Default Reasoning} \label{sect:default_reasoning}

In this section we will classify the complexity of the three problems
$\EXT(B)$, $\CRED(B)$, and $\SKEP(B)$ for all choices of Boolean
connectives $B$. We start with some preparations which will
substantially reduce the number of cases we have to consider.
  \begin{lemma} \label{lem:cloneB-equivcd-cloneB+1}
    Let $\Prob$ be any of the problems $\EXT$, $\CRED$, or $\SKEP$. Then
    for each finite set $B$ of Boolean functions,
    $\Prob(B) \equivcd \Prob(B\cup \{\true\})$.
  \end{lemma}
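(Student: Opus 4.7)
The containment $\Prob(B) \leqcd \Prob(B\cup\{\true\})$ is immediate, since every $B$-formula is already a $(B\cup\{\true\})$-formula, so the identity reduction works. For the nontrivial direction $\Prob(B\cup\{\true\}) \leqcd \Prob(B)$, I would introduce a fresh propositional variable $t$ and map an instance $\encoding{W,D}$ (together with a query formula $\varphi$ in the $\CRED$/$\SKEP$ case) to the $B$-default theory $\encoding{W',D'}$ with
\[
  W' := W_{[\true/t]} \cup \{t\}, \qquad D' := D_{[\true/t]},
\]
where the substitution replaces every occurrence of the $0$-ary connective $\true$ by the atom $t$ in all formulae of $W$ and in the prerequisite, justification, and consequent of every default in $D$ (and also in $\varphi$ for the reasoning problems). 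Since $t$ is merely a variable, this produces $B$-formulae, and since the substitution is a purely local rewriting it is implementable by a logtime-uniform $\AC{0}$-circuit, matching the definition of $\leqcd$.

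The heart of the argument is a bridge lemma: for any set $S$ of $B$-formulae with $t\in S$ and any $(B\cup\{\true\})$-formula $\psi$, one has $S \models \psi_{[\true/t]}$ iff $S_{[t/\true]} \models \psi$. This holds because every model of $S$ must assign $t$ to $\true$, and under such assignments $\psi$ and $\psi_{[\true/t]}$ have identical truth values. Using this, I would show that the map $E \mapsto E_{[\true/t]} \cup \{t\}$ (extended by deductive closure) is a bijection between the stable extensions of $\encoding{W,D}$ and those of $\encoding{W',D'}$. Concretely, using Theorem~\ref{thm:reiter-extensions}\eqref{thm:reiter-extensions-iterative}, I would verify by induction on $i$ that the stages $E_i$ and $E_i'$ of the iterative construction for the two theories correspond exactly under the substitution, so that $E = \bigcup_i E_i$ is a fixed point iff $E' = \bigcup_i E_i'$ is, which settles $\EXT$. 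For $\CRED$ and $\SKEP$, the correspondence together with the bridge lemma immediately gives $\varphi \in E$ iff $\varphi_{[\true/t]} \in E'$, since $t \in E'$ for every extension produced.

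The main obstacle, and the only step requiring genuine care, is the bridge lemma together with its use in the inductive step: one has to check that deductive closure, which appears in clause~\eqref{def_gamma_2} of the definition of $\Gamma$ and in the recursion $E_{i+1} \supseteq \theorems{E_i}$, is compatible with the $\true\leftrightarrow t$ substitution when $t$ is forced. Once this is verified, the applicability of default rules in clause~\eqref{def_gamma_3} transfers automatically because both the prerequisite $\alpha$ and the negated justification $\neg\beta$ go through the same substitution, and both are evaluated in sets that contain $t$. This then yields the claimed constant-depth equivalence uniformly for all three problems.
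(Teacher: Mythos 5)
Your proposal is correct and follows essentially the same route as the paper: the identity reduction for the easy direction, and for the converse the Lewis-style substitution of $\true$ by a fresh variable $t$ forced true via $W' = W_{[\true/t]} \cup \{t\}$, $D' = D_{[\true/t]}$, with the correspondence of stable extensions under this substitution handling all three problems at once. Your bridge lemma and the stage-by-stage induction just make explicit the correspondence that the paper asserts more tersely.
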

  
  \begin{proof}
    The reductions $\Prob(B) \leqcd \Prob(B \cup \{\true\})$ are
    obvious. 
    For the converse reductions, we will
    essentially substitute the constant $\true$
    by a new variable $t$ that is forced to be true (this trick goes
    back to Lewis \cite{lew79}). 
    For $\EXT$, the reduction is given by 
    $\encoding{W,D} \mapsto \encoding{W',D'}$, where 
    $W'=W_{[\true/t]} \cup \{t\}$, $D'=D_{[\true / t]}$, and $t$ is a
    new variable not occurring in $\encoding{W,D}$. 
    If $\encoding{W',D'}$ possesses a stable extension
    $E'$, then $t\in E'$. Hence, $E'_{[t/\true]}$ is
    a stable extension of $\encoding{W,D}$. On the other hand, if
    $E$ is a stable extension of $\encoding{W,D}$, then
    $\theorems{E_{[\true/ t]} \cup \{t\}} = E_{[\true/t]}$ is a stable 
    extension of $\encoding{W',D'}$.
    Therefore, each extension $E$ of $\encoding{W,D}$ corresponds to the
    extension $E_{[1/t]}$  of $\encoding{W',D'}$, and vice versa. 
  
    For the problems $\CRED$ and $\SKEP$, it suffices to note that the
    above reduction $\encoding{W,D} \mapsto \encoding{W',D'}$ has the
    additional property that for each formula $\varphi$ and each
    extension $E$ of $\encoding{W,D}$,
    $\varphi \in E$ if and only if $\varphi_{[1/ t]} \in E_{[1/t]}$.
  \end{proof}
  
  The next lemma shows that, quite often, $B$-default
  theories have unique extensions.
  \begin{lemma} \label{lem:extension-R1-M-unique}
    Let $B$ be a finite set of Boolean functions. 
    Let $\encoding{W,D}$ be a $B$-default theory. 
    If $[B] \subseteq \CloneR_1$ then $\encoding{W,D}$ has a unique stable extension.
    If $[B]\subseteq \CloneM$ then $\encoding{W,D}$ has at most one stable extension.
  \end{lemma}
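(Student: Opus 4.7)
The common theme for both parts is that for $[B] \subseteq \CloneR_1$ the all-true assignment satisfies every $B$-formula, while for $[B] \subseteq \CloneM$ monotonicity implies that every \emph{satisfiable} $B$-formula is satisfied by the all-true assignment. I will exploit this to show that the blocking condition ``$\neg\beta \notin E$'' which governs the applicability of defaults in Theorem~\ref{thm:reiter-extensions}(\ref{thm:reiter-extensions-iterative}) collapses to an $E$-independent statement, forcing the iteration there to become deterministic.

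For part~(a), since every $B$-formula is $1$-reproducing, $W$ is satisfied by the all-true assignment and hence consistent; by Corollary~\ref{corr_cons_ext}, every stable extension $E$ of $\encoding{W,D}$ is consistent. I would then verify by induction along Theorem~\ref{thm:reiter-extensions}(\ref{thm:reiter-extensions-iterative}) that each $E_i$ (and thus $E$) is satisfied by the all-true assignment: the base case $E_0 = W$ is immediate, the closure $\theorems{\cdot}$ preserves the property, and each added consequent $\gamma$ is a $B$-formula and hence satisfied by the all-true assignment. Consequently $\neg\beta \notin E$ for every $B$-formula $\beta$, so the recurrence
\[
  E^*_0 = W, \qquad E^*_{i+1} = \theorems{E^*_i} \cup \Bigl\{\gamma \Bigm| \tfrac{\alpha:\beta}{\gamma} \in D,\ \alpha \in E^*_i \Bigr\}
\]
defines a unique candidate $E^* := \bigcup_i E^*_i$ that every stable extension must equal. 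Running the same induction on $E^*$ itself then shows that the all-true assignment satisfies $E^*$, so $\neg\beta \notin E^*$ for every $B$-formula $\beta$, verifying via Theorem~\ref{thm:reiter-extensions}(\ref{thm:reiter-extensions-iterative}) that $E^*$ really is a stable extension.

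For part~(b), the clone $\CloneM$ contains $\false$, so $W$ may be inconsistent; in that event the iteration immediately produces $\theorems{W} = \allFormulae$, forcing $\allFormulae$ to be the unique stable extension. If $W$ is consistent, then any stable extension $E$ is consistent (Corollary~\ref{corr_cons_ext}), so its set of generating defaults from Theorem~\ref{thm:reiter-extensions}(\ref{thm:reiter-extensions-generating-defaults}) can contain no default with an unsatisfiable consequent. By monotonicity, every satisfiable $B$-formula---in particular every such consequent $\gamma$, as well as every formula in $W$---is satisfied by the all-true assignment; hence so is $E$. Therefore $\neg\beta \in E$ holds precisely when $\beta$ is unsatisfiable, a property of $\beta$ alone, and the recurrence of Theorem~\ref{thm:reiter-extensions}(\ref{thm:reiter-extensions-iterative}) becomes deterministic across all stable extensions, giving at most one. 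Existence can genuinely fail, as shown by the example $W = \emptyset$, $D = \{\tfrac{\true:x}{\false}\}$, which is why the lemma only asserts ``at most one''.

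The only delicate point is not to conflate the ``target'' set $E$ that appears in the blocking condition of Theorem~\ref{thm:reiter-extensions}(\ref{thm:reiter-extensions-iterative}) with the sets $E_i$ built by the recurrence. Once ``$\neg\beta \notin E$'' is reduced to a statement about $\beta$ alone, this circularity dissolves; doing so is entirely a consequence of the algebraic closure properties of $\CloneR_1$ and $\CloneM$ captured by the behaviour of the all-true assignment.
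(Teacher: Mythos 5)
Your proof is correct and follows essentially the same route as the paper: both arguments observe that every (consistent) extension is satisfied by the all-true assignment, so the negated justifications can never belong to it, which makes the iteration of Theorem~\ref{thm:reiter-extensions}\,(\ref{thm:reiter-extensions-iterative}) deterministic. You merely spell out more carefully the circularity issue and the inconsistent-$W$ case for $\CloneM$, which the paper's terser proof leaves implicit.
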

  
  \begin{proof}
    For $[B] \subseteq \CloneR_1$, every premise, justification and
    consequent is $\true$-reproduc{\-}ing. As all consequences of 
    $\true$-reproducing functions are again 
    $\true$-reproducing and 
    the negation of a
    $\true$-reproducing function is not $\true$-reproducing,
    the justifications in $D$ become irrelevant. Hence
    the characterization of stable extensions from the first item in 
    Theorem~\ref{thm:reiter-extensions} simplifies to the
    following iterative construction:
    $E_0=W$ and
      $
        E_{i+1}= \theorems{E_i} \cup \big\{\gamma \,\big\arrowvert\,
        \frac{\alpha : \beta}{\gamma} \in D, \alpha \in E_i \big\}.
      $
    As $D$ is finite, this construction terminates after finitely many
    steps, \emph{i.e.}, $E_k=E_{k+1}$ for some $k\geq 0$.
    Then $E= \bigcup_{i \leq k} E_i$ is the unique stable extension of
    $\encoding{W,D}$. For a similar result confer \cite[Theorem 4.6]{bonoli02}.

    For  $[B] \subseteq \CloneM$, every formula is either
    $\true$-reproducing or equivalent to $\false$. As rules with
    justification equivalent to $\false$ are never applicable, each
    $B$-default theory $\encoding{W,D}$ with finite $D$ has at most one
    stable extension by the same argument as above.
  \end{proof}

  As an immediate corollary, the credulous and the skeptical reasoning
  problem are equivalent for the above choices of the underlying connectives.
  \begin{corollary} \label{cor:skep(b)-equiv-cred(b)}
    Let $B$ be a finite set of Boolean functions such that $[B] \subseteq \CloneR_1$ or $[B] \subseteq \CloneM$. Then $\CRED(B) \equivcd \SKEP(B)$.
  \end{corollary}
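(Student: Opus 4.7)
The plan is to split into two cases corresponding to the two hypotheses of Lemma~\ref{lem:extension-R1-M-unique}.

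First, suppose $[B] \subseteq \CloneR_1$. The lemma then guarantees a \emph{unique} stable extension for every $B$-default theory, so $\CRED(\varphi,\encoding{W,D})$ and $\SKEP(\varphi,\encoding{W,D})$ both amount to asking whether this unique extension contains $\varphi$, and the identity reduction suffices in both directions.

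Next, suppose $[B] \subseteq \CloneM$. The lemma now only guarantees \emph{at most} one extension, so we must contend with the possibility that none exists. In that situation $\CRED$ and $\SKEP$ diverge ($\CRED$ is false while $\SKEP$ is vacuously true), and a non-trivial reduction is needed. The key observation is that $[B] \not\subseteq \CloneR_1$ (otherwise the previous case applies) forces $\false \in [B]$: a monotone function that is not $\true$-reproducing must already be identically $\false$. Hence the constant $\false$ is a legal $B$-formula, and extension existence can be detected through oracle calls to the target problem.

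Concretely, for $\CRED \leqcd \SKEP$ the plan is to use the identity
\[
  \CRED(\varphi,\encoding{W,D}) \;\equiv\; \SKEP(\varphi,\encoding{W,D}) \,\wedge\, \bigl(\SKEP(\false,\encoding{W,\emptyset}) \vee \neg \SKEP(\false,\encoding{W,D})\bigr),
\]
whose bracketed subformula simulates $\EXT(\encoding{W,D})$: the query $\SKEP(\false,\encoding{W,\emptyset})$ holds iff the unique extension $\theorems{W}$ of $\encoding{W,\emptyset}$ contains $\false$, i.e.\ iff $W$ is inconsistent and therefore $\allFormulae$ is an extension of $\encoding{W,D}$; and $\neg\SKEP(\false,\encoding{W,D})$ holds iff $\encoding{W,D}$ possesses a consistent extension (using Corollary~\ref{corr_cons_ext}). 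For the opposite direction $\SKEP \leqcd \CRED$ we introduce a fresh variable $z$ not occurring in the input (a legal $B$-formula for any $B$) and use
\[
  \SKEP(\varphi,\encoding{W,D}) \;\equiv\; \neg\CRED(z,\encoding{W\cup\{z\},D}) \,\vee\, \CRED(\varphi,\encoding{W,D}),
\]
exploiting that $z$ is forced into every extension of $\encoding{W\cup\{z\},D}$ and that, since $z$ is fresh, its addition to $W$ does not affect the firing of any default, so $\EXT(\encoding{W\cup\{z\},D})=\EXT(\encoding{W,D})$.

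The main obstacle lies in the $\CloneM$ case: because extensions can fail to exist there, one has to simulate the predicate $\EXT$ through oracle gates for the target problem. Verifying the two displayed equivalences reduces to a routine case analysis over (i)~whether $W$ is consistent and (ii)~whether a consistent extension of $\encoding{W,D}$ exists, and checking in each of the four resulting subcases that the boolean combination of oracle answers coincides with the desired truth value. The $\CloneR_1$ case is immediate from the uniqueness lemma.
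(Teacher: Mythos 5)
Your proof is correct, and it is in fact more careful than the paper's, which offers no written argument at all: the corollary is presented as ``immediate'' from Lemma~\ref{lem:extension-R1-M-unique}, i.e.\ the intended reasoning is simply that a unique extension makes credulous and skeptical reasoning coincide. Your $\CloneR_1$ case is exactly that observation. Where you genuinely depart from (and improve on) the paper is the $\CloneM$ case: you correctly point out that ``at most one extension'' does not give the identity reduction, since on theories without any extension $\CRED$ is false while $\SKEP$ is vacuously true, and this case really occurs for monotone $B$ (the $\DeltaPtwo$-hardness construction in Lemma~\ref{lem:extension-existence-DeltaP2} produces exactly such theories). Your fix --- simulating the predicate $\EXT(\encoding{W,D})$ by auxiliary oracle queries, using that a monotone non-$\true$-reproducing clone contains $\false$ (so $g(x,\dots,x)\equiv\false$ for some $g\in B$ that is not $\true$-reproducing) for the $\CRED\leqcd\SKEP$ direction, and a fresh variable $z$ forced into every extension for the $\SKEP\leqcd\CRED$ direction --- checks out in all subcases, including inconsistent $W$ (where the unique extension is $\allFormulae$ and both sides are true), and the Boolean combinations of oracle answers are admissible under the paper's definition of $\leqcd$. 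In short: same key lemma, but your version actually closes a gap the paper leaves implicit, at the cost of two extra oracle queries per instance.
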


\subsection{The Extension Existence Problem}

  Now we are ready to classify the complexity of $\EXT$. The next
  theorem shows that this is a hexachotomy: the
  $\SigmaPtwo$-completeness of the general case \cite{gottlob:92} is
  inherited by all clones above $\CloneS_1$ and $\CloneD$;
  for monotone sets of connectives the complexity drops to $\DeltaPtwo$-completeness
  if $\land$, $\lor$ and $\false$ are available, and membership in $\P$ otherwise 
  (with this case splitting up into $\P$-completeness, $\NL$-completeness and triviality);
  lastly, for affine sets of connectives containing $\neg$ or $\false$ the complexity of $\EXT$ reduces to
  $\NP$-completeness.
  \begin{theorem} \label{thm:extension-existence}
    Let $B$ be a finite set of Boolean functions. Then $\EXT(B)$ is 
    \begin{enumerate}
      \item 
      \label{thm:extension-existence-SigmaP2}
      $\SigmaPtwo$-complete if $\CloneS_{1} \subseteq [B] \subseteq \CloneBF$ or $\CloneD \subseteq [B] \subseteq \CloneBF$,
      \item 
      \label{thm:extension-existence-DeltaP2}
      $\DeltaPtwo$-complete if $\CloneS_{11} \subseteq [B] \subseteq \CloneM$,
      \item 
      \label{thm:extension-existence-NP-complete}
      $\NP$-complete if $[B] \in \{\CloneN,\CloneN_2,\CloneL,\CloneL_0,\CloneL_3\}$, 
      \item 
      \label{thm:extension-existence-P}
      $\P$-complete if $[B] \in \{\CloneE,\CloneE_0,\CloneV,\CloneV_0\}$,
      \item 
      \label{thm:extension-existence-NL}
      $\NL$-complete if $[B] \in \{\CloneI,\CloneI_0\}$,
      and
      \label{thm:extension-existence-trivial}
      \item trivial in all other cases (\emph{i.e.}, if $[B] \subseteq \CloneR_1$).
    \end{enumerate}
  \end{theorem}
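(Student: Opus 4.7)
The plan is to treat the six cases in turn, exploiting the monotonicity of complexity along Post's lattice: whenever $[B]\subseteq[B']$, each $B$-formula is simulated by a short $B'$-formula, so $\EXT(B)\leqcd\EXT(B')$, and by Lemma~\ref{lem:cloneB-equivcd-cloneB+1} we may freely add $\true$ whenever convenient. Since the six cases partition Post's lattice, it suffices to verify the upper bound at the top of each interval and the matching lower bound at the bottom.

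For the upper bounds, case~\ref{thm:extension-existence-SigmaP2} is immediate from Theorem~\ref{thm:gottlob}. In case~\ref{thm:extension-existence-DeltaP2} Lemma~\ref{lem:extension-R1-M-unique} guarantees at most one extension, which I build by the iterative construction of Theorem~\ref{thm:reiter-extensions}(\ref{thm:reiter-extensions-iterative}): in each round I scan every default $\frac{\alpha:\beta}{\gamma}$ and, using an $\IMP(B)$-oracle (in $\co\NP$ by Theorem~\ref{thm:implication}), test both $E_i\models\alpha$ and the satisfiability of $E_i\cup\{\beta\}$, adding $\gamma$ precisely when both pass. After at most $|D|$ rounds the candidate stabilises and a final oracle pass certifies it, yielding a $\P^{\NP}=\DeltaPtwo$ algorithm. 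For cases~\ref{thm:extension-existence-NP-complete}--\ref{thm:extension-existence-NL}, Theorem~\ref{thm:implication} gives $\IMP(B)\in\P$, so the witness characterisation of Theorem~\ref{thm:reiter-extensions}(\ref{thm:reiter-extensions-generating-defaults}) supplies an $\NP$ algorithm: guess the generating defaults and verify in polynomial time. The sharper bounds for cases~\ref{thm:extension-existence-P} and~\ref{thm:extension-existence-NL} come from direct algorithms: for $\CloneE,\CloneV$ the formulae are essentially positive clauses, so the unique extension is obtained by Horn-style unit saturation ($\P$); for $\CloneI,\CloneI_0$ only literals and constants appear, and extension existence reduces to reachability in the implication graph induced by $W$ and~$D$ ($\NL$). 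The trivial case is immediate from Lemma~\ref{lem:extension-R1-M-unique}.

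For the lower bounds, case~\ref{thm:extension-existence-SigmaP2} is obtained by adapting Gottlob's reduction: whenever $\CloneS_1\subseteq[B]$ or $\CloneD\subseteq[B]$, adding the constants (free by Lemma~\ref{lem:cloneB-equivcd-cloneB+1}) yields $\CloneBF$ as a clone, and Lewis' trick keeps the simulating $B$-formulae short and linear in each variable. For the remaining cases I reduce from standard complete problems: $\NP$-hardness in case~\ref{thm:extension-existence-NP-complete} from a $3$-$\mathrm{SAT}$ variant tailored to the restricted connectives, where the expressive weakness of the formula language is compensated by chaining defaults; $\P$-hardness in case~\ref{thm:extension-existence-P} from $\textsc{HornSat}$ (or its dual for $\CloneV$); $\NL$-hardness in case~\ref{thm:extension-existence-NL} from directed $s$-$t$ connectivity, encoded as a chain of $\CloneI$-defaults of the form $\frac{x_i:x_i}{x_{i+1}}$ tracing the edges of the given graph. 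Finally, for $\DeltaPtwo$-hardness in case~\ref{thm:extension-existence-DeltaP2} I reduce from a $\DeltaPtwo$-complete problem such as the parity of the lexicographically largest satisfying assignment of a monotone formula, using a chain of defaults to force the unique extension to encode bit by bit the polynomial sequence of $\NP$ queries implicit in the problem, and leveraging the $\co\NP$-hardness of $\IMP(\CloneS_{11})$ from Theorem~\ref{thm:implication} as the hardness ingredient inside each step.

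The main obstacle will be case~\ref{thm:extension-existence-DeltaP2}. The upper bound crucially relies on uniqueness (Lemma~\ref{lem:extension-R1-M-unique}) to avoid an existential quantifier over candidate extensions that would otherwise push the complexity back to $\SigmaPtwo$, and one must argue carefully that the iterative procedure actually converges to the true extension using only polynomially many oracle queries, despite the fact that intermediate candidates may not yet be closed under deduction. The matching $\DeltaPtwo$-hardness is equally delicate, because the entire reduction must live inside the monotone $1$-separating fragment $\CloneS_{11}\subseteq[B]\subseteq\CloneM$, where explicit negation is unavailable and the iterated $\NP$-query structure characteristic of $\DeltaPtwo$ has to be simulated purely through the interplay of monotone defaults with the $\co\NP$-hard implication problem of the underlying clone.
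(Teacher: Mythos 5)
Your overall architecture coincides with the paper's: reduce to representative clones via Lemma~\ref{lem:cloneB-equivcd-cloneB+1} and Lewis's short-formula trick, use the iterative construction of Theorem~\ref{thm:reiter-extensions}\,\eqref{thm:reiter-extensions-iterative} together with an $\IMP(B)$-oracle for the monotone upper bound (the paper's Algorithm~\ref{alg:extension-existence-dl}), guess-and-check generating defaults for the $\NP$ upper bound, and prove hardness from $\ThreeSAT$, reachability, and a chained-SAT-query $\DeltaPtwo$-complete problem (the paper uses $\SNSAT$; your lexicographic-maximum variant is interchangeable). Two of your sketches, however, have real holes. First, and most importantly, the $\DeltaPtwo$-hardness reduction for $\CloneS_{11}\subseteq[B]\subseteq\CloneM$ --- which you correctly single out as the crux --- is not actually constructed. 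The difficulty is not ``leveraging the $\co\NP$-hardness of $\IMP$'': it is that each step of the chain must make a default applicable exactly when $\varphi^i$ (under the previously determined values $c_1,\dots,c_{i-1}$) is \emph{unsatisfiable}, and unsatisfiability is a negative condition that monotone formulae cannot express directly, since every consistent monotone theory is satisfied by the all-ones assignment and hence never entails anything ``by contradiction'' in the usual sense. The paper's solution is a variable-doubling device: introduce $x'_j,z'_{ij}$ as monotone stand-ins for $\neg x_j,\neg z_{ij}$, assert $x_j\lor x'_j$ and $z_{ij}\lor z'_{ij}$ in $W$, and detect unsatisfiability of $\varphi^i$ as the monotone entailment of $\bigvee_j(z_{ij}\land z'_{ij})\lor\bigvee_j(x_j\land x'_j)$, which then serves as the prerequisite of the default concluding $x'_i$ (and, at stage $n$, concluding $\false$ to destroy the extension). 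Without this gadget or an equivalent one, your plan for case~\ref{thm:extension-existence-DeltaP2} does not go through; your phrase ``encode bit by bit the sequence of $\NP$ queries'' names the goal but not the mechanism.

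Second, a smaller but genuine omission: your $\NL$-hardness reduction encodes the edges of a $\GAP$-instance as defaults $\frac{x_i:x_i}{x_{i+1}}$, but every such theory has a stable extension regardless of reachability, so the reduction as stated decides nothing. You need the additional default $\frac{p_t:p_t}{\false}$ so that reachability of $t$ forces the derivation of $\false$ from a consistent $W$, which by Corollary~\ref{corr_cons_ext} destroys the (unique) extension; the reduction then targets $\overline{\GAP}$ and you conclude via closure of $\NL$ under complementation. The same remark applies to your $\P$-hardness reduction from Horn-type reachability (the paper uses $\overline{\HGAP}$): the hard direction is non-existence of an extension, so a $\false$-concluding terminal default is indispensable there as well. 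Your monotone upper bound also needs the analogous explicit check --- if a default with conclusion equivalent to $\false$ becomes applicable while $W$ is consistent, the algorithm must reject --- which your ``final oracle pass certifies it'' presumably intends but should state, since otherwise the iteration happily outputs an inconsistent candidate that is not a fixed point of $\Gamma$.
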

  
  The proof of Theorem~\ref{thm:extension-existence} will be established from the lemmas in this subsection.
  
  \begin{lemma}\label{lem:extension-existence-DeltaP2}
    Let $B$ be a finite set of Boolean functions such that $\CloneS_{11} \subseteq [B] \subseteq \CloneM$.
    Then $\EXT(B)$ is $\DeltaPtwo$-complete.
  \end{lemma}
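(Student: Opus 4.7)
The plan is to establish membership in $\DeltaPtwo$ and matching $\DeltaPtwo$-hardness separately.

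For membership, I would use Lemma~\ref{lem:extension-R1-M-unique}: whenever $[B] \subseteq \CloneM$ the theory has at most one stable extension. Re-examining its proof, I note that for any monotone $\beta$, either $\beta \equiv \false$ (checkable in polynomial time by evaluating $\beta$ at the all-ones assignment) or $\beta$ is $\true$-reproducing, in which case $\neg\beta \notin E$ holds automatically whenever $E$ is consistent. Hence the candidate extension $E^{*}$ can be built by the greedy iteration that starts with $E_0 = W$ and at each step $i \ge 0$ adds every $\gamma$ for which there is a default $\frac{\alpha:\beta}{\gamma} \in D$ with $\beta \not\equiv \false$ and $\alpha \in \theorems{E_i}$; this stabilises at $E^{*}$ after at most $|D|$ rounds. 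A stable extension exists iff either $W$ is inconsistent (in which case $\allFormulae$ is one, by Corollary~\ref{corr_cons_ext}) or $E^{*}$ is itself consistent. Every test in this procedure---both the entailments $\alpha \in \theorems{E_i}$ and the final consistency check---reduces to monotone implication, which by Theorem~\ref{thm:implication} is $\co\NP$-complete because $\CloneS_{10} \subseteq \CloneS_{11} \subseteq [B]$. The whole procedure therefore runs in polynomial time with a $\co\NP$ oracle, placing $\EXT(B)$ in $\P^{\co\NP} = \DeltaPtwo$.

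For hardness I would invoke Lemma~\ref{lem:cloneB-equivcd-cloneB+1} to assume $\true \in B$; together with the base $\{x \land (y \lor z),\false\}$ of $\CloneS_{11}$, this puts $\{\land,\lor,\false,\true\}$ into $[B]$. I then reduce from the $\DeltaPtwo$-complete problem of sequential monotone implication: given pairs of monotone formulas $(\Phi_i,\psi_i)_{i=1}^{m}$ over disjoint per-query private variables plus shared \emph{answer variables} $u_1,\ldots,u_{m-1}$, decide the last bit $a_m$ of the sequence defined inductively by $a_i = 1$ iff $\Phi_i \models \psi_i$ after each $u_j$ with $j < i$ has been replaced by $a_j$. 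Its $\DeltaPtwo$-completeness follows by a routine adaptation of the proof for sequential SAT, using the $\co\NP$-completeness of monotone implication (Theorem~\ref{thm:implication}). The default theory places the $\Phi_i$ into $W$ on disjoint copies of the private variables (so that different queries do not interact) and contains one default $d_i = \frac{\psi_i : \true}{u_i^{+}}$ per query, together with a final ``acceptance gadget'' default whose consequent is $\false$ and whose prerequisite holds precisely when the pattern of answer variables corresponds to $a_m = 0$.

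The main obstacle will be that later queries may refer to $u_j$ negatively, whereas monotone defaults cannot assert a variable to be false nor directly detect the absence of a derivation. I plan to circumvent this by the classical dual-encoding trick: for each $j$, introduce a second answer variable $u_j^{-}$ and a companion default $d_j^{-}$ whose prerequisite is a structurally disjoint copy of the 3-UNSAT-to-monotone-implication reduction (which translates a $3$-CNF $\chi$ into a monotone CNF $\tilde\chi$ over fresh variables $p_k, n_k$ so that $\chi$ is unsatisfiable iff $\tilde\chi \models \bigvee_k (p_k \land n_k)$) applied to the complementary query, linked to $u_j^{+}$ by consistency clauses in $W$ that, under the greedy construction above, force exactly one of $u_j^{+}, u_j^{-}$ to be derived in the unique candidate extension. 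An induction on $i$ then shows that these derivations correctly track $a_i$, so the acceptance gadget triggers inconsistency iff $a_m = 0$. Pinning down this dual encoding and the accompanying inductive verification is the technical heart of the reduction.
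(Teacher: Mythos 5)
Your membership argument is sound and is essentially the paper's: the greedy iteration starting from $W$, firing every default whose justification is not equivalent to $\false$ and whose prerequisite is entailed, with a $\co\NP$ oracle for monotone implication, is exactly the paper's Algorithm for $\EXT(B)\in\DeltaPtwo$, and your observation that a stable extension exists iff $W$ is inconsistent or the resulting $E^{*}$ is consistent is correct.

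The hardness half, however, has a genuine gap precisely at the point you defer as ``the technical heart.'' Your plan hinges on ``consistency clauses in $W$ that \ldots force exactly one of $u_j^{+}, u_j^{-}$ to be derived,'' but monotone clauses can only force \emph{at least} one of a dual pair to be true in every model; they cannot exclude models that set both $u_j^{+}$ and $u_j^{-}$ (or both $p_k$ and $n_k$) to $\true$. Since the applicability of a default with prerequisite $\psi_i$ is an implication quantifying over \emph{all} models of the current extension, these ``cheating'' models act as spurious countermodels and break the inductive claim that $d_j$ fires iff $a_j=1$ -- the direction where the implication $\Phi_i[u/a]\models\psi_i[u/a]$ holds but some model of $E$ with $u_j^{+}=u_j^{-}=\true$ fails $\psi_i$. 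The fix, which is where the paper's reduction from $\SNSAT$ does its real work, is to build the escape disjunct into the \emph{prerequisite} of every default (the paper uses prerequisites of the form $\bigvee_j (z_{ij}\land z'_{ij})\lor\bigvee_j(x_j\land x'_j)$, so cheating models vacuously satisfy the thing to be implied) and then to invoke monotonicity to show non-cheating models correspond exactly to assignments of the original formula. Without spelling this out, the correctness of your reduction is not established. Two further loose ends: the $\DeltaPtwo$-completeness of your intermediate ``sequential monotone implication'' problem is asserted rather than proved (the paper avoids this by reducing directly from $\SNSAT$ and performing the monotonization inside the reduction), and when passing from $\{\land,\lor,\false,\true\}$ to a general base $B$ with $\CloneS_{11}\subseteq[B]\subseteq\CloneM$ one must re-bracket conjunctions and disjunctions into logarithmic-depth trees to avoid an exponential blow-up when substituting the $B$-representations of $\land$ and $\lor$; your sketch omits this.
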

  \begin{proof}
    We start by showing $\EXT(B) \in \DeltaPtwo$.
    Let $B$ be a finite set of Boolean functions such that $[B] \subseteq \CloneM$ and $\encoding{W,D}$ be a $B$-default theory.
    As the negated justification $\neg \beta$ of every default rule 
    $\frac{\alpha:\beta}{\gamma}\in D$ is either 
    equivalent to the constant $\true$ or not $\true$-reproducing,
    it holds that in the former case $\neg \beta$ is contained in any stable extension, 
    whereas in the latter $\neg \beta$ cannot be contained 
    in a consistent stable extension of $\encoding{W,D}$.
    We can distinguish between those two cases in polynomial time. 
    Therefore, using the characterization of 
    Theorem~\ref{thm:reiter-extensions}\,\eqref{thm:reiter-extensions-iterative}, 
    we can iteratively compute the applicable defaults and 
    test whether the premise of any default with unsatisfiable conclusion can be derived.%
    \begin{algorithm}[tb]
      \caption{Determining the existence of a stable extension}      
      \label{alg:extension-existence-dl}
      \begin{algorithmic}[1]
        \REQUIRE $\encoding{W,D}$
        \STATE $G_\mathrm{new} \leftarrow W$
        \REPEAT
          \STATE $G_\mathrm{old} \leftarrow G_\mathrm{new}$
          \FORALL{$\frac{\alpha:\beta}{\gamma} \in D$}
            \IF{$G_\mathrm{old} \models \alpha$ \textbf{and} $\beta \not\equiv \false$}
              \IF{$\gamma \equiv \false$}
                \RETURN \FALSE
              \ENDIF
              \STATE $G_\mathrm{new} \leftarrow G_\mathrm{new} \cup \{\gamma\}$
            \ENDIF
          \ENDFOR
        \UNTIL{$G_\mathrm{new} = G_\mathrm{old}$}
        \RETURN \TRUE
      \end{algorithmic}
    \end{algorithm}
    Algorithm~\ref{alg:extension-existence-dl} implements these
    steps on a deterministic Turing machine using a $\co\NP$-oracle to
    test for implication of $B$-formulae. Clearly, Algorithm~\ref{alg:extension-existence-dl}
    terminates after a polynomial number of steps. Hence, $\EXT(B)$ is contained in $\DeltaPtwo$.
    
    To show the $\DeltaPtwo$-hardness of $\EXT(B)$,
    we reduce from the $\DeltaPtwo$-complete problem $\SNSAT$~\cite[Theorem~3.4]{gottlob:95b} defined as follows:
    \problemdef{$\SNSAT$}
    {A sequence $(\varphi^i)_{1 \leq i \leq n}$ of formulae such that $\varphi^i$ contains the propositions $x_1,\ldots,x_{i-1}$ and $z_{i1},\ldots,z_{im_i}$}
    {Is $c_n = \true$, where $c_i$ is recursively defined via $c_i:=\true$ if and only if $\varphi^i$ is satisfiable by an assignment $\sigma$ such that $\sigma(x_j)=c_j$ for all $1 \leq j < i$?}
    
    Let $(\varphi^i)_{1 \leq i \leq n}$
    be the given sequence of propositional formulae and assume without loss of generality that $\varphi^i$ is in conjunctive normal form for all $1 \leq i \leq n$.
    For every proposition $x_j$ or $z_{ij}$ occurring in $(\varphi^i)_{1 \leq i \leq n}$, let 
    $x'_j$ respectively $z'_{ij}$ be a fresh proposition, 
    and define 
    \[
      \psi^i:=\varphi^i_{[
      \neg x_1/x'_1,\ldots,\neg x_{i-1}/x'_{i-1},
      \neg z_{i1}/z'_{i1},\ldots,\neg z_{im_i}/z'_{im_i}
      ]} 
      \land \bigwedge_{j=1}^{i-1} (x_j \lor x'_j) 
      \land \bigwedge_{j=1}^{m_i} (z_{ij} \lor z'_{ij}).
    \]
    The key observation in the relationship of $\varphi^i$ and $\psi^i$ is that, 
    for all $c_1,\ldots,c_{i-1} \in \{\false,\true\}$,
    $\varphi^i_{[x_1/c_1,\ldots,x_{i-1}/c_{i-1}]}$ is unsatisfiable if and only if 
    for each model $\sigma$ of $\psi^i_{[x_1/c_1,\ldots,x_{i-1}/c_{i-1},x'_1/\neg c_1,\ldots,x'_{i-1}/\neg c_{i-1}]}$ 
    there exists an index $1 \leq j \leq m_i$ 
    such that $\sigma$ sets to $\true$ both $z_{ij}$ and $z'_{ij}$. 
    We will use this observation to show that the $B$-default theory $\encoding{W,D}$ 
    defined below has a stable extension if and only if $(\varphi^i)_{1 \leq i \leq n}$
    is an instance of $\SNSAT$, that is, $\varphi^n_{[x_1/c_1,\ldots,x_{i-1}/c_{i-1}]}$ 
    is satisfiable for $c_1,\ldots,c_{i-1}$ recursively defined via
    \begin{equation} \label{eq:extension-existence-DeltaP2-1}
      c_i := \true \iff \varphi^i_{[x_1/c_1,\ldots,x_{i-1}/c_{i-1}]} \text{ is satisfiable}.
    \end{equation}
    Define $W:= \{\psi^1,\ldots,\psi^n\}$ and 
    \begin{align*}
      D&:= 
      \left\{
        \frac{\bigvee_{j=1}^{m_i} (z_{ij} \land z'_{ij}) \lor \bigvee_{j=1}^{i-1} (x_j \land x'_j) : \true}
        {x_i'} \,\middle|\, 1 \leq i < n
      \right\}
      \; \cup \\
      &\hphantom{\,:=\;} \left\{
        \frac{\bigvee_{j=1}^{m_n} (z_{nj} \land z'_{nj}) \lor \bigvee_{j=1}^{n-1} (x_j \land x'_j) : \true}
             {\false}
      \right\}.
    \end{align*}
    We will prove the claim appealing to the characterization of stable 
    extensions from Theorem~\ref{thm:reiter-extensions}\,\eqref{thm:reiter-extensions-iterative}.
    Let $E_0:=W$.
    If $\varphi^1$ is unsatisfiable then 
    $\frac{\bigvee_{j=1}^{m_1} (z_{1j} \land z'_{1j}) : \true}
          {x_1'}$ is applicable 
    and thus $x_1'$ is added to $E_1$.
    On the other hand, if $\varphi^1$ is satisfiable then there exists a model
    $\sigma$ of $\varphi^1$. Define $\hat\sigma$ as the extension of $\sigma$ 
    defined as $\hat\sigma(z'_{1j})=\neg\sigma(z_{1j})$ for all $1 \leq j \leq m_1$.
    By virtue of $\sigma \models \varphi^1$ and the construction of $\hat\sigma$, 
    we obtain that $\hat\sigma \models \psi^1$ while $\hat\sigma \not\models \bigvee_{j=1}^{m_1} (z_{1j} \land z'_{1j})$. Summarizing, $\varphi^1$ is unsatisfiable
    if and only if $\frac{\bigvee_{j=1}^{m_1} (z_{1j} \land z'_{1j}) : \true}{x_1'}$ is applicable. 

    Now suppose that $E_i$ is such that for all $j<i$ the proposition
    $x_j'$ is contained in $E_i$ if and only if $\varphi^j_{[x_1/c_1,\ldots,x_{j-1}/c_{j-1}]}$ with $c_1,\ldots,c_{j-1}$ defined as in \eqref{eq:extension-existence-DeltaP2-1} is unsatisfiable.
    If $\varphi^i_{[x_1/c_1,\ldots,x_{i-1}/c_{i-1}]}$ is unsatisfiable then 
    any model of the formula 
    \begin{equation}\label{eq:extension-existence-DeltaP2-2}
      \psi^i \land 
      \bigwedge_{\substack{1 \leq j < i,\\\sigma(c_j)=\true}} x_j \land 
      \bigwedge_{\substack{1 \leq j < i,\\\sigma(c_j)=\false}} x_j'
    \end{equation}
    sets to $\true$ both $z_{ij}$ and $z'_{ij}$ for some $1 \leq j \leq m_i$. 
    From \eqref{eq:extension-existence-DeltaP2-2} and the monotonicity of $\psi^i$, 
    we obtain that for each model $\sigma'$ of 
    $\psi^i \land \bigwedge_{1 \leq j < i,\sigma(c_j)=\false} x_j'$
    there must exist either an index $1 \leq j < i$ such that $\sigma'$
    sets $x_j$ and $x_j'$ to $\true$, or an index $1 \leq j \leq m_i$ such that $\sigma'$ sets $z_{ij}$ and $z'_{ij}$ to $\true$.
    Consequently, $\frac{\bigvee_{j=1}^{m_i} (z_{ij} \land z'_{ij}) \lor \bigvee_{j=1}^{i-1} (x_j \land x'_j) : \true}{x_i'}$ is applicable and $x_i' \in E_{i+1}$.
    On the other hand, if $\varphi^i_{[x_1/c_1,\ldots,x_{i-1}/c_{i-1}]}$ is satisfiable
    then there exists a model $\sigma$ that can be extended to $\hat\sigma$ by
    $\hat\sigma(z'_{ij})=\neg\sigma(z_{ij})$ for all $1 \leq j \leq m_i$ and
    $\hat\sigma(x'_{j})=\neg\sigma(x_{j})$ for all $1 \leq j < i$ 
    such that $\hat\sigma \models \psi^i$ and $\hat\sigma \not\models \bigvee_{j=1}^{m_i} (z_{ij} \land z'_{ij}) \lor \bigvee_{j=1}^{i-1} (x_j \land x'_j)$.  
    Summarizing, $\varphi^i$ is unsatisfiable if and only if $\frac{\bigvee_{j=1}^{m_i} (z_{ij} \land z'_{ij}) \lor \bigvee_{j=1}^{i-1} (x_j \land x'_j) : \true}{x_i'}$ is applicable.
    
    The direction from right to left now follows from 
    the fact that $\varphi_n$ is satisfiable if and only if 
    $\frac{\bigvee_{j=1}^{m_n} (z_{ij} \land z'_{ij}) \lor \bigvee_{j=1}^{n-1} (x_i \land x'_i) : \true}{\false}$ is not applicable, 
    which in turn implies that $\encoding{W,D}$ has a stable extension.
    Conversely, if $\varphi^n_{[x_1/c_1,\ldots,x_{n-1}/c_{n-1}]}$ is unsatisfiable with $c_1,\ldots,c_{n-1}$ defined as in \eqref{eq:extension-existence-DeltaP2-1}, then any model of 
    $\psi^i \land \bigwedge_{1 \leq j < i,\sigma(c_j)=\false} x_j'$
    sets to true either $x_j$ and $x_j'$ for some $1 \leq j < i$ or 
    $z_{ij}$ and $z'_{ij}$ for some $1 \leq j \leq m_i$.
    As a result, the default $\frac{\bigvee_{j=1}^{m_n} (z_{ij} \land z'_{ij}) \lor \bigvee_{j=1}^{n-1} (x_j \land x'_j) : \true}{\false}$ is applicable and $\encoding{W,D}$ does not possess a stable extension.
    
    Finally, observe that all formulae contained in $\encoding{W,D}$ are monotone. 
    Hence, $\encoding{W,D}$ is a $\{\land,\lor,\false,\true\}$-default theory. 
    Let $B$ be a finite set of Boolean functions such that $\CloneS_{11} \subseteq [B]$.
    Replacing all occurrences of $x \land y$ and $x \lor y$ in $\encoding{W,D}$ 
    with their respective $(B \cup \{\true\})$-representations $f_\land(x,y)$ and $f_\lor(x,y)$,
    and eliminating the constant $\true$ as in the proof of Lemma~\ref{lem:cloneB-equivcd-cloneB+1}
    yields a $B$-default theory $\encoding{W^B,D^B}$ that is equivalent to $\encoding{W,D}$. 
    The variables $x$ or $y$ may occur several times in the body of $f_\land$ or $f_\lor$, 
    hence $\encoding{W^B,D^B}$ might be exponential in the length of the original input. 
    To avoid this blowup, we exploit the associativity of $\land$ and $\lor$:
    we insert parentheses such that the conjunctions and disjunctions in each of the above 
    formulae are transformed into trees of logarithmic depth. 
    
    Thus we have established a reduction from $\SNSAT$ to $\EXT(B)$ for all $B$ such that $\CloneS_{11} \subseteq [B]$.
    This concludes the proof.
  \end{proof}

  \begin{lemma} \label{lem:extension-existence-NP}
    Let $B$ be a finite set of Boolean functions such that $[B] \in \{\CloneN, \CloneN_2,$ $\CloneL, \CloneL_0, \CloneL_3\}$.
    Then $\EXT(B)$ is $\NP$-complete.
  \end{lemma}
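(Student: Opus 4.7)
The proof splits into an $\NP$ upper bound and an $\NP$-hardness argument. For membership in $\NP$, I use the polynomial-size certificate provided by the generating defaults of Theorem~\ref{thm:reiter-extensions}\,\eqref{thm:reiter-extensions-generating-defaults}: any stable extension $E$ of $\encoding{W,D}$ is the deductive closure of $W\cup\{\gamma\mid \frac{\alpha:\beta}{\gamma}\in G\}$ for some $G\subseteq D$. The $\NP$ algorithm nondeterministically guesses $G$ and then verifies that the resulting $E$ is a fixed point of $\Gamma$ by testing, for every $d=\frac{\alpha:\beta}{\gamma}\in D$, whether $E\models\alpha$ and whether $E\models\neg\beta$, and matching the outcome against the condition $d\in G$. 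For each clone listed in the statement none of $\CloneS_{00},\CloneS_{10},\CloneD_2$ is contained in $[B]$, hence by Theorem~\ref{thm:implication} every such implication test runs in polynomial time, yielding $\EXT(B)\in\NP$.

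For $\NP$-hardness I plan to reduce from $3$-SAT. The guiding idea is that default nondeterminism supplies the missing existential quantification even when the underlying propositional language is severely restricted. Given $\varphi=\bigwedge_{j=1}^{m}C_{j}$ with $C_{j}=\ell_{j,1}\vee\ell_{j,2}\vee\ell_{j,3}$ over $x_{1},\ldots,x_{n}$, I add for every $x_{i}$ the pair of \emph{guessing} defaults $\frac{\true:x_{i}}{x_{i}}$ and $\frac{\true:\neg x_{i}}{\neg x_{i}}$; together they force any consistent stable extension to contain exactly one of $x_{i},\neg x_{i}$, thereby encoding a total assignment $\sigma$. For each clause $C_{j}$ I introduce fresh atoms $a_{j}^{1},a_{j}^{2},a_{j}^{3}$ together with the chain $\frac{\neg\ell_{j,1}:\true}{a_{j}^{1}}$, $\frac{a_{j}^{1}:\neg\ell_{j,2}}{a_{j}^{2}}$, $\frac{a_{j}^{2}:\neg\ell_{j,3}}{a_{j}^{3}}$, arranged so that $a_{j}^{3}\in E$ iff $\sigma$ falsifies all three literals of $C_{j}$. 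Finally I attach a \emph{killer} default $\frac{a_{j}^{3}:p_{j}}{\neg p_{j}}$ with a fresh variable $p_{j}$, exploiting the classical observation that a rule of the shape $\frac{\alpha:\beta}{\neg\beta}$ admits no stable extension in which $\alpha$ is derivable; consequently $\encoding{W,D}$ has a stable extension precisely when $\sigma$ satisfies every $C_{j}$.

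To realise the reduction inside each of the five clones I invoke Lemma~\ref{lem:cloneB-equivcd-cloneB+1} to assume $\true\in B$. For $[B]\in\{\CloneN,\CloneN_2\}$ only literals and constants occur in the defaults above, so the reduction lives directly in $\allFormulae(B)$. For the linear clones $\CloneL,\CloneL_{0},\CloneL_{3}$ negation is available either outright or as $x\xor\true$, with the ternary shape required by $\CloneL_{3}$ accommodated by padding each $\xor$ with a duplicated literal so that self-duality is preserved; in every case all prerequisites, justifications, and consequents admit $B$-formula representations of polynomial size.

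The principal obstacle is showing that the chain construction captures clause falsification \emph{exactly} and that the killer defaults eliminate only the intended extensions, without introducing spurious fixed points. This is settled by a stage-by-stage analysis of the iterative construction $E_{0}=W,E_{1},E_{2},\ldots$ from Theorem~\ref{thm:reiter-extensions}\,\eqref{thm:reiter-extensions-iterative}: one argues inductively that each $a_{j}^{k}$ enters the sequence precisely when the first $k$ literals of $C_{j}$ have already been falsified by the literals present at the previous stage, so that $a_{j}^{3}$ appears iff the guessed assignment refutes $C_{j}$, and hence a stable extension survives the killer defaults iff $\varphi$ is satisfiable.
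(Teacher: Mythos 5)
Your $\NP$ upper bound has a genuine gap: the verification step is too weak. Guessing $G\subseteq D$, forming $E=\theorems{W\cup\{\gamma\mid \frac{\alpha:\beta}{\gamma}\in G\}}$, and checking that $G$ coincides with the set of defaults $\frac{\alpha:\beta}{\gamma}$ with $E\models\alpha$ and $E\not\models\neg\beta$ only verifies the \emph{necessary} condition of Theorem~\ref{thm:reiter-extensions}\,\eqref{thm:reiter-extensions-generating-defaults}; it does not ensure that $E$ is \emph{grounded}, i.e., that the prerequisites of the defaults in $G$ are derivable in the stagewise construction of Theorem~\ref{thm:reiter-extensions}\,\eqref{thm:reiter-extensions-iterative}. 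Concretely, take $W=\emptyset$ and $D=\big\{\frac{\true:q}{\neg q},\ \frac{p:\true}{p},\ \frac{p:\true}{\neg q}\big\}$, a $\CloneN$-default theory. It has no stable extension (the two defaults with prerequisite $p$ can never fire starting from $\theorems{\emptyset}$, and the remaining rule is self-defeating), yet the guess $G=\big\{\frac{p:\true}{p},\frac{p:\true}{\neg q}\big\}$ yields $E=\theorems{\{p,\neg q\}}$, which passes your test: both guessed defaults have prerequisite $p\in E$ and unblocked justification $\true$, while the first rule is blocked because $\neg q\in E$. So your algorithm wrongly accepts. The fix is what the paper does: after guessing $G$, iteratively recompute generators via $G_0=W$ and $G_{i+1}=G_i\cup\{\gamma\mid \frac{\alpha:\beta}{\gamma}\in D,\ G_i\models\alpha,\ E\not\models\neg\beta\}$ (justifications always checked against the guessed $E$), and accept only if the fixed point of this iteration generates exactly $E$. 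All implication tests remain polynomial by Theorem~\ref{thm:implication}, so the corrected procedure is still in $\NP$.

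Your hardness argument is correct but takes a different route from the paper. You encode clause falsification by a three-rule chain over fresh atoms $a_j^1,a_j^2,a_j^3$ and then destroy the extension with a killer default $\frac{a_j^3:p_j}{\neg p_j}$; the paper instead adds, for every clause and every permutation $\pi$ of $\{1,2,3\}$, the rule $\frac{\sneg{\ell}_{i\pi(1)}:\sneg{\ell}_{i\pi(2)}}{\ell_{i\pi(3)}}$, so that a falsified clause directly forces a contradictory literal into the extension, which is impossible since $W=\emptyset$ guarantees consistency by Corollary~\ref{corr_cons_ext}. Both constructions are sound; the paper's avoids auxiliary variables and the killer gadget at the price of six rules per clause, while yours is the more generic chain-plus-inconsistency pattern. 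Your handling of the clones (all formulae in the reduction are literals, so hardness for $\CloneN$ propagates to $\CloneL$, and the constant-free clones are recovered via Lemma~\ref{lem:cloneB-equivcd-cloneB+1}) matches the paper.
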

  \begin{proof}
    We start by showing $\EXT(B) \in \NP$ for $[B] \subseteq \CloneL$. Given a
    default theory $\encoding{W,D}$, we first guess a set $G\subseteq
    D$ which will serve as the set of generating defaults for a stable
    extension.  
    Let $G'=W \cup \{\gamma \mid \frac{\alpha : \beta}{\gamma}  \in G\}$.
    We use Theorem~\ref{thm:reiter-extensions} to verify whether
    $\theorems{G'}$ is indeed a stable extension of $\encoding{W,D}$.
    For this we inductively compute generators $G_i$ for the sets
    $E_i$ from Theorem~\ref{thm:reiter-extensions},
    until eventually $E_i=E_{i+1}$ (note, that because
    $D$ is finite, this always occurs).
    We start by setting $G_0=W$. Given $G_i$, we check for each rule
    $\frac{\alpha:\beta}{\gamma} \in D$, whether $G_i \models \alpha$ 
    and $G' \not\models \neg\beta$ (as all formulae belong to
    $\allFormulae(B)$, this is possible by
    Theorem~\ref{thm:implication}). If so, then $\gamma$ is put into 
    $G_{i+1}$. If this process terminates, \emph{i.e.}, if $G_i=G_{i+1}$,
    then we check whether $G'=G_i$. 
    By Theorem~\ref{thm:reiter-extensions}, this test is positive if and only if
    $G$ generates a stable extension of $\encoding{W,D}$.

    To show $\NP$-hardness of $\EXT(B)$ for $\CloneN \subseteq [B]$, we will
    $\leqcd$-reduce $\ThreeSAT$ to $\EXT(B)$. 
    Let $\varphi = \bigwedge_{i=1}^n (\ell_{i1} \lor \ell_{i2} \lor \ell_{i3})$
    and $\ell_{ij}$ be literals over propositions $\{x_1,\ldots,x_m\}$ for 
    $1 \leq i \leq n$, $1 \leq j \leq 3$. We transform $\varphi$ to the 
    $B$-default theory $\encoding{W,D_\varphi}$, where $W:=\emptyset$ and 
    \begin{align*}
      D_\varphi := & \bigg\{ \frac{\true : x_i}{x_i} \,\bigg\arrowvert\, 1 \leq i \leq m \bigg\} \cup 
            \bigg\{ \frac{\true : \neg x_i}{\neg x_i} \,\Big\arrowvert\, 1 \leq i \leq m \bigg\} \cup \\
          & \bigg\{ \frac{\sneg{\ell}_{i\pi(1)} : \sneg{\ell}_{i\pi(2)}}{\ell_{i\pi(3)}} \,\bigg\arrowvert\, 1 \leq i
            \leq n, \pi \text{ is a permutation of } \{1,2,3\} \bigg\} \enspace.
    \end{align*} 
    To prove the correctness of the reduction, 
    first assume $\varphi$ to be satisfiable. For each 
    satisfying assignment $\sigma:\{x_1,\dots,x_m\} \to \{0,1\}$ for $\varphi$, we claim that
    \[ 
      E:= \theorems{\{ x_i \mid \sigma(x_i)=\true \}
        \cup \{ \neg x_i \mid \sigma(x_i)=\false \}}
    \]
    is a stable extension of $\encoding{W,D_\varphi}$.
    We will verify this claim with the help of the first part of
    Theorem~\ref{thm:reiter-extensions}. 
    Starting with $E_0=\emptyset$, we already get $E_1=E$
    by the default rules 
    $\frac{\true : x_i}{x_i}$ and 
    $\frac{\true : \neg x_i}{\neg x_i}$ in $D_\varphi$. 
    As $\sigma$ is a satisfying assignment for
    $\varphi$, each consequent of a default rule in $D_\varphi$ is already
    in $E$. Hence $E_2=E_1$ and therefore $E=\bigcup_{i\in\N} E_i$ is 
    a stable extension of $\encoding{W,D_\varphi}$.

    Conversely, assume that $E$ is a stable extension of
    $\encoding{W,D_\varphi}$. 
    Because of the default rules $\frac{\true : x_i}{x_i}$ and 
    $\frac{\true : \neg x_i}{\neg x_i}$, we either get
    $x_i\in E$ or $\neg x_i\in E$ for all $i=1,\dots, m$.
    The rules of the type 
    $\frac{\sneg{\ell}_{i1} : \sneg{\ell}_{i2}}{\ell_{i3}}$ ensure
    that $E$ contains at least one literal from each clause 
    $\ell_{i1}\vee \ell_{i2} \vee \ell_{i3}$ in $\varphi$.
    As $E$ is deductively closed, $E$ contains $\varphi$. 
    By Corollary~\ref{corr_cons_ext}, the extension $E$ is consistent, and therefore
    $\varphi$ is satisfiable. 
    
    Hence, $\EXT(B)$ is $\NP$-complete for every finite set $B$ such that $\CloneN \subseteq [B] \subseteq \CloneL$.
    The remaining cases $[B]\in \{\CloneN_2,\CloneL_0,\CloneL_3\}$ follow from Lemma~\ref{lem:cloneB-equivcd-cloneB+1}, 
    because     
    $[\CloneN_2 \cup \{\true\}]=\CloneN$,
    $[\CloneL_0 \cup \{\true\}]=\CloneL$, and 
    $[\CloneL_3 \cup \{\true\}]=\CloneL$.     
  \end{proof}

  \begin{lemma}\label{lem:extension-existence-P}
    Let $B$ be a finite set of Boolean functions such that 
    $[B] \in \{\CloneE,\CloneE_0,\linebreak[1]\CloneV,\CloneV_0\}$. 
    Then $\EXT(B)$ is $\P$-complete.
  \end{lemma}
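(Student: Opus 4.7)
My plan separates membership from hardness.

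For $\EXT(B)\in\P$, all four clones lie in $\CloneM$, so the correctness argument for Algorithm~\ref{alg:extension-existence-dl} from Lemma~\ref{lem:extension-existence-DeltaP2} applies verbatim. The only step there exceeding polynomial time was the implication test $G_{\mathrm{old}}\models\alpha$, handled by a $\co\NP$-oracle. However, none of $\CloneS_{00}$, $\CloneS_{10}$, $\CloneD_2$ is a sub-clone of $\CloneE$ or $\CloneV$, so Theorem~\ref{thm:implication} yields $\IMP(B)\in\P$; swapping the oracle call for a polynomial-time implication test therefore gives $\EXT(B)\in\P$.

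For $\P$-hardness, Lemma~\ref{lem:cloneB-equivcd-cloneB+1} lets me restrict attention to $[B]=\CloneE$ and $[B]=\CloneV$. In the conjunctive case, I would reduce from the classical $\P$-complete Path System Accessibility Problem. Given sources $V_0\subseteq V$, triples $R\subseteq V^{3}$, and a target $t\in V$, I set $W:=V_0$ and
\[
D \;:=\; \left\{\frac{u\land v:\true}{w}\,\middle|\,(u,v,w)\in R\right\}\cup\left\{\frac{t:\true}{\false}\right\}.
\]
Induction on the iterative construction of Theorem~\ref{thm:reiter-extensions}\,(1) shows that $E_i$ enumerates exactly the vertices accessible from $V_0$ within $i$ rounds, so a stable extension of $\encoding{W,D}$ exists iff $t$ is \emph{not} accessible. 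Closure of $\P$ under complementation, together with the logarithmic-depth parenthesisation of the conjunctions used in Lemma~\ref{lem:extension-existence-DeltaP2}, then yields the desired $\leqcd$-reduction and thus $\P$-hardness.

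The main obstacle is the dual case $[B]=\CloneV$, in which conjunctive prerequisites are no longer available. Here my plan is to reduce from a disjunctively phrased $\P$-complete problem (for example HORN-SAT, approached via its positive-literal dualisation, or an equivalent variant of monotone circuit value), simulating every conjunctive antecedent by a cascade of defaults whose disjunctive prerequisites and auxiliary ``token'' consequents accumulate across successive rounds of the iterative extension construction; a final default with $\false$-consequent is triggered precisely when the target would otherwise be derived. Correctness then follows by structural induction on the round index, and the associativity-based tree-balancing of Lemma~\ref{lem:extension-existence-DeltaP2} again preserves constant-depth form. The technical heart---and the main difficulty of the proof---is designing the cascade so that it is simultaneously sound (no premature $\false$-firing from partial evidence) and complete (the $\false$ default eventually fires whenever the simulated instance is ``positive''), despite the monotonicity of $\CloneV$ preventing any genuine use of negation in justifications.
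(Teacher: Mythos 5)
Your membership argument and your hardness argument for the conjunctive clones are essentially the paper's own proof: the paper also runs Algorithm~\ref{alg:extension-existence-dl} with the now-polynomial implication test from Theorem~\ref{thm:implication}, and for $\CloneE_0\subseteq[B]$ it reduces from the complement of the hypergraph accessibility problem $\HGAP$ via $W:=\{p_s\mid s\in S\}$ and defaults $\frac{\bigwedge_{v\in\src{e}}p_v:\true}{p_{\dest{e}}}$ plus $\frac{p_t:\true}{\false}$, which is your path-system reduction in different clothing. Up to that point the proposal is fine.

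The $\CloneV$/$\CloneV_0$ case, however, is a genuine gap: you state a plan (``a cascade of defaults whose disjunctive prerequisites and auxiliary token consequents accumulate across rounds'') but you do not exhibit the construction, and you yourself flag its soundness/completeness as the unresolved ``technical heart.'' The difficulty is real. With only $\lor$ and $\false$ available, every formula in $W$ and every prerequisite and consequent is a disjunction of positive atoms, and a set $A$ of such clauses implies a clause $\bigvee_{i\in I}p_i$ exactly when $A$ already contains a clause over a subset of $I$; there is no way for several separately derived ``tokens'' to jointly enable a prerequisite, so a cascade cannot simulate a conjunctive antecedent in the way you describe. The paper's resolution is a complementation trick: the set $S_i$ of nodes reached so far is encoded by the \emph{single} clause $\bigvee_{v\in V\setminus S_i}p_v$ over its complement. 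Concretely, $W:=\bigl\{\bigvee_{s\notin S}p_s\bigr\}$ and each hyperedge $e$ yields the default
\[
\frac{\bigvee_{v\in V\setminus\src{e}}p_v\;:\;\true}{\bigvee_{v\in V\setminus(\src{e}\cup\{\dest{e}\})}p_v},
\]
together with $\frac{\bigvee_{v\in V\setminus\{t\}}p_v:\true}{\false}$. Since a subset-clause implies a superset-clause, the prerequisite of the default for $e$ becomes derivable precisely when $\src{e}\subseteq S_i$, i.e., when $V\setminus S_i\subseteq V\setminus\src{e}$, and the consequent records that $\dest{e}$ has been added. This single-clause encoding of the entire reachable set is the missing idea; without it (or an equivalent device) your proposal does not establish $\P$-hardness for $\CloneV_0\subseteq[B]$, and the lemma's statement for $\CloneV$ and $\CloneV_0$ remains unproved.
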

  \begin{proof}
    Let $B$ be a finite set of Boolean functions such that 
    $[B] \in \{\CloneE,\CloneE_0,\CloneV,\CloneV_0\}$. 
    Membership in $\P$ is is obtained from Algorithm~\ref{alg:extension-existence-dl}, 
    as for these types of $B$-formulae, we have an efficient 
    test for implication.
    
    To prove $\P$-hardness for $\CloneE_0 \subseteq [B]$,
    we provide a reduction from the complement of the accessibility problem for directed
    hypergraphs, $\overline{\HGAP}$. 
    In directed hypergraphs $H=(V,F)$, hyperedges $e \in F$ consist of a
    set of source nodes $\mathrm{src}(e) \subseteq V$ and a 
    destination $\mathrm{dest}(e) \in V$.
    Instances of $\HGAP$ contain a directed hypergraph $H=(V,F)$, a
    set $S \subseteq V$ of source nodes, and a target node $t\in V$.
    $\HGAP$ is $\P$-complete under
    $\leqcd$-reductions \cite{sriy90}, even if 
    restricted to hypergraphs whose edges contain at most two source nodes.
    
    We transform a given instance $(H,S,t)$ to 
    the $\EXT(\{\land,\false,\true\})$-instance $\encoding{W,D}$ with
    \[
      W := \{p_s \mid s \in S\},\ \
      D := \bigg\{ \frac{\bigwedge_{v \in \src{e}} p_v  : \true }
                        {p_{\dest{e}}} 
                   \,\bigg|\, e \in F
            \bigg\} 
            \cup
            \bigg\{ \frac{p_t  : \true }
                         {\false} 
            \bigg\} 
    \]
    with pairwise distinct propositions $p_v$ for $v \in V$.
    It is easy to verify that $(H,s,t) \in \HGAP \iff \encoding{W,D} \notin \EXT(\{\land,\false,\true\})$. 
    Using Lemma~\ref{lem:cloneB-equivcd-cloneB+1} and replacing $\land$ by its $B$-representation, 
    we obtain $\overline{\HGAP} \leqcd \EXT(B)$ for all finite sets $B$ such that $\CloneE_0 \subseteq [B]$. As $\P$ is closed under complementation, $\EXT(B)$ is $\P$-complete.

    For $\CloneV_0 \subseteq [B]$, set
    \[
      W := \Big\{ \bigvee_{s \notin S} p_s \Big\},\
      D := \bigg\{ \frac{\bigvee_{v \in V \setminus \src{e}} p_v  : \true } 
                        {\bigvee_{v \in V \setminus (\src{e} \cup \{\dest{e}\})} p_v} 
                   \,\bigg|\, e \in F
           \bigg\} 
           \cup
           \bigg\{ \frac{\bigvee_{v \in V\setminus\{t\}}  p_v  : \true }
                        {\false}
           \bigg\}.
    \]
    We claim that this mapping realizes the reduction 
    $\overline{\HGAP} \leqcd \EXT(\{\lor,\false,\true\})$.
    First suppose that $t$ can be reached from $S$ in $H$. Then there exists
    a sequence $(S_i)_{0 \leq i \leq n}$ of sets of nodes  
    such that $S_0=S$, $t \in S_n$, and for all $0 \leq i <n$,
    $S_{i+1}$ is obtained from $S_i$ by  
    adding the destination $\mathrm{dest}(e)$ of a hyperedge $e \in F$
    with $\mathrm{src}(e) \subseteq S_i$. 
    Let $(e_i)_{0 \leq i < n}$ denote the corresponding sequence of hyperedges used to obtain $S_{i+1}$ from $S_i$.
    Then, for all $0 \leq i < n$, the following holds:
    \[
      \src{e_i} \subseteq S_i \iff \frac{\bigvee_{v \in V \setminus \src{e}} p_v  : \true } 
                        {\bigvee_{v \in V \setminus (\src{e} \cup \{\dest{e}\})} p_v} \text{ is applicable in } E_i,
    \]
    where $(E_{i})_{i \in \N}$ is the sequence from Theorem~\ref{thm:reiter-extensions}\,\eqref{thm:reiter-extensions-iterative}.
    As $\bigcup_{i\in \N} E_i$ is guaranteed to be unique by Lemma~\ref{lem:extension-R1-M-unique} and $t \in S_{n}$, we obtain that $\false \in E_{n+1}$. Consequently, $\encoding{W,D}$ does not possess a stable extension.

    Conversely, if $\encoding{W,D}$ does not admit a stable extension, then $\false$ has to be derivable. Accordingly, there exists a sequence of defaults $(d_i)_{0 \leq i \leq n}$ such that the premise of $d_i$ can be derived from 
    $W \cup \big\{\gamma \,\big|\, d_j=\frac{\alpha:\beta}{\gamma}, 0 \leq j <i\big\}$ and
    $d_n = \frac{\bigvee_{v \in V\setminus\{t\}}  p_v  : \true }{\false}$.
    By construction of $\encoding{W,D}$, this sequence can be
    translated into a sequence $(S_i)_{0 \leq i \leq n}$ of node sets in the hypergraph
    such that $S_0=S$, $t \in S_n$, and for all $0 \leq i <n$,
    $S_{i+1}$ is obtained from $S_i$ by  
    adding the destination $\dest{e}$ of a hyperedge $e \in F$
    with $\src{e} \subseteq S_i$. 
    Consequently, $t$ is reachable from $S$ in $H$ and we conclude
    that $\overline{\HGAP} \leqcd \EXT(\{\lor,\false,\true\})$. 
    Using Lemma~\ref{lem:cloneB-equivcd-cloneB+1}, we get
    $\overline{\HGAP} \leqcd \EXT(\{\lor,\false\})$.

    To see that $\EXT(\{\lor,\false\}) \leqcd \EXT(B)$ for all finite sets $B$ 
    such that $\CloneV_0 \subseteq [B]$, 
    we proceed as in the proof of Lemma~\ref{lem:extension-existence-DeltaP2} 
    and insert parentheses such that the disjunctions in each of the above 
    formulae are transformed into tree of logarithmic depth. Hence, 
    replacing all occurrences of $\lor$ in $W$, $D$ and $\varphi$ with its $B$-representation 
    yields an $\EXT(B)$-instance of size polynomial in the original input.    
    Concluding, $\EXT(B)$ is $\P$-complete.
  \end{proof}
  
  \begin{lemma}\label{lem:extension-existence-NL}
    Let $B$ be a finite set of Boolean functions such that 
    $[B] \in \{\CloneI,\CloneI_0\}$. 
    Then $\EXT(B)$ is $\NL$-complete with respect to constant-depth reductions.
  \end{lemma}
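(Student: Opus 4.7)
The plan is to exploit the fact that, since $[B]\subseteq\CloneI$, every $B$-formula is a single propositional variable or one of the constants $\true,\false$. Consequently, the iterative extension construction from Theorem~\ref{thm:reiter-extensions}\,\eqref{thm:reiter-extensions-iterative} essentially becomes a directed graph reachability question. Because moreover $[B]\subseteq\CloneM$, Lemma~\ref{lem:extension-R1-M-unique} ensures that there is at most one stable extension to consider.

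For $\EXT(B)\in\NL$, I would proceed as follows. If $\false\in W$ then $W$ is inconsistent and $\allFormulae$ is a stable extension by Corollary~\ref{corr_cons_ext}. Otherwise every formula arising in the construction is either a propositional variable or $\true$; any such set is satisfied by the all-true assignment and hence entails $\neg\beta$ only when $\beta\equiv\false$. Thus a default $\frac{\alpha:\beta}{\gamma}$ is applicable precisely when $\beta\not\equiv\false$ and its single-atom prerequisite $\alpha$ has already been derived. Build the directed graph on the atoms of $\encoding{W,D}$ together with the nodes $\true$ and $\false$, inserting an edge $\alpha\to\gamma$ for each default whose justification is not $\equiv\false$. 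Then $\encoding{W,D}$ admits a stable extension iff $\false$ is \emph{not} reachable from $W\cup\{\true\}$; this non-reachability test lies in $\co\NL=\NL$.

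For $\NL$-hardness of $\EXT(\CloneI)$, I would $\leqcd$-reduce from directed $(s,t)$-reachability. Given $G=(V,E)$ with distinguished vertices $s,t$, introduce a fresh proposition $p_v$ for each $v\in V$ and set
\[
  W:=\{p_s\},\qquad D:=\left\{\frac{p_u:\true}{p_v}\ \middle|\ (u,v)\in E\right\}\cup\left\{\frac{p_t:\true}{\false}\right\}.
\]
A straightforward induction on path length shows that $p_v$ enters the iterative construction iff $v$ is reachable from $s$, so by the analysis above $\encoding{W,D}\notin\EXT(\CloneI)$ iff $t$ is reachable from $s$. Since $\NL=\co\NL$ and the map is clearly computable by a logtime-uniform $\AC{0}$-circuit family, this establishes $\NL$-hardness under $\leqcd$-reductions. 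The remaining case $[B]=\CloneI_0$ then follows from Lemma~\ref{lem:cloneB-equivcd-cloneB+1} because $[\CloneI_0\cup\{\true\}]=\CloneI$. The main subtle point is the reduction of the semantic check $\neg\beta\in E$ to the purely syntactic check $\beta\equiv\false$; it depends critically on the positivity of every generated consequent and is precisely the feature that collapses once $[B]$ exceeds $\CloneI$.
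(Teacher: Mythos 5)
Your proof is correct and follows essentially the same route as the paper: membership via a constant-depth reduction to $\overline{\GAP}$ on the graph of atoms (with the $\AC0$-checkable case of inconsistent $W$ handled separately), and hardness via the converse reduction from $\GAP$, invoking $\NL=\co\NL$ in both directions. The only cosmetic difference is that the paper's hardness gadget uses $\frac{p_u:p_u}{p_v}$, so the justifications are variables and the construction applies verbatim to $\CloneI_0$ without the detour through Lemma~\ref{lem:cloneB-equivcd-cloneB+1}.
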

  \begin{proof}
    Let $B$ be a finite set of Boolean functions such that $[B] \in \{\CloneI,\CloneI_0\}$.
    We will first show membership in $\NL$ by giving a reduction to the complement of the graph accessibility problem, $\overline{\GAP}$.
    
    Let $\encoding{W,D}$ be a $B$-default theory.
    Analogously to the proof of Lemma~\ref{lem:extension-existence-DeltaP2}, 
    it holds that $\encoding{W,D}$ has a stable extension if and only if 
    either $W$ is inconsistent or the conclusions of all applicable defaults are consistent.
    Assume that $W$ is consistent
    and denote by $D' \subseteq D$ those defaults $\frac{\alpha:\beta}\gamma \in D$ with $\beta \not\equiv \false$.
    Then a $B$-default rule $\frac{\alpha:\beta}\gamma \in D'$ is applicable if and only if the proposition 
    $\alpha$ is contained in $W$ or itself the conclusion of an applicable default. 
    Therefore, testing whether the conclusions of all applicable defaults are 
    consistent is essentially equivalent to solving a reachability problem in a directed graph.
    Define $G_{\encoding{W,D}}$ as the directed graph $(V,F)$ with 
    \begin{align*}
      V &:= \textstyle \{\false,\true\} \cup W \cup \left\{\alpha,\gamma \,\middle|\, \frac{\alpha:\beta}\gamma \in D \right\}, \\
      F &:= \textstyle \{ (\true,x) \,|\, x \in W \} \cup \left\{ (\alpha,\gamma) \middle| \frac{\alpha:\beta}\gamma \in D, \beta \not\equiv \false \right\} \\
    \shortintertext{if $W$ is consistent, and }
      V &:= \{\false,\true\}, \\
      F &:= \emptyset
    \end{align*}
    otherwise.
    It is easy to see that $\encoding{W,D}$ has a stable extension if and only if 
    there is no path from $\true$ to $\false$ in $G_{\encoding{W,D}}$.
    Thus the function mapping the given $B$-default theory $\encoding{W,D}$ to the $\GAP$-instance $(G_{\encoding{W,D}},\true,\false)$
    constitutes a reduction from $\EXT(B)$ to $\overline{\GAP}$. 
    As the consistency of $W$ can be determined in $\AC0$, the reduction can be computed using constant-depth circuits. 
    Membership in $\NL$ follows from the closure of $\NL$ under complementation.

    To show $\NL$-hardness, we establish a constant-depth reduction 
    in the converse direction.
    For a directed graph $G=(V,F)$ and two nodes $s,t\in V$, we
    transform the given $\GAP$-instance $(G,s,t)$ to $\encoding{W,D}$ with
    \[
      \textstyle
      W       := \{p_s\},\ \
      D       := \Big\{ \frac{p_u : p_u}{ p_v } \,\Big|\, (u,v) \in F \Big\} \cup 
                 \Big\{ \frac{p_t : p_t}{ \false } \Big\}
    \]
    Clearly, $(G,s,t) \in \GAP$ if and only if $\encoding{W,D}$ does not have a stable extension.
    As $\NL$ is closed under complementation, the lemma is established.
  \end{proof}

  \begin{proof}[Theorem~\ref{thm:extension-existence}]
    For $\CloneS_{1} \subseteq [B] \subseteq \CloneBF$ or $[B]=
    \CloneD$, observe that in both cases $\CloneBF = [B \cup \{\true\}]$.  
    Claim~\ref{thm:extension-existence-SigmaP2} then follows from
    Theorem~\ref{thm:gottlob} and Lemma~\ref{lem:cloneB-equivcd-cloneB+1}.
    Claims two to five are established in Lemmas~\ref{lem:extension-existence-DeltaP2}--\ref{lem:extension-existence-NL}.
    For all sets $B$ not captured by the above, it now holds that $[B] \subseteq \CloneR_1$.
    Thus, the sixth claim follow directly from Lemmas~\ref{lem:extension-R1-M-unique}, 
  \end{proof}

\subsection{The Credulous and the Skeptical Reasoning Problem}
  
  We will now analyse the credulous and the skeptical reasoning problems.
  For these problems, there are two sources of complexity. On the one
  hand, we need to determine a candidate for a stable extension. On
  the other hand, we have to verify that this candidate is indeed a
  finite characterization of some stable extension that includes a
  given formula\,---\,a task that requires to test for formula implication. 
  Depending on the Boolean connectives allowed, one or both tasks can be performed in polynomial time. 
  %In principle, this yields five possible complexity degrees, 
  %namely the classes of the polynomial hierarchy below and including $\SigmaP2$.
  %We will see that all five cases actually occur, 
  %with the tractable case splitting further into two sub-cases.
  We obtain the full complexity, \emph{i.e.},  
  $\SigmaPtwo$-completeness for $\CRED(B)$ and $\PiPtwo$-completeness 
  for $\SKEP(B)$, where both problems $\EXT(B)$ and $\IMP(B)$ attain their highest
  complexity. 
  The complexity reduces to $\DeltaPtwo$ for
  clones that allow for an efficient computation of stable extensions
  but whose implication problem remains $\co\NP$-complete. More precisely, 
  the problem is $\DeltaPtwo$-complete if $\false \in [B]$
  and becomes $\co\NP$-complete otherwise.
  Conversely, if the implication problem becomes easy
  but determining an extension candidate is hard, then
  $\CRED(B)$ is $\NP$-complete, while $\SKEP(B)$ is $\co\NP$-complete.
  This is the case for $[B]\in\{\CloneN,\CloneN_2,\CloneL,\CloneL_0,\CloneL_3\}$.
  Finally, for clones $B$ that allow for solving both tasks in
  polynomial time, both $\CRED(B)$ and $\SKEP(B)$ are in $\P$. 

  The complete classification of $\CRED(B)$ is given in the following theorem.
    \begin{theorem} \label{thm:credulous-reasoning}
    Let $B$ be a finite set of Boolean functions. Then $\CRED(B)$ is 
    \begin{enumerate}
      \item \label{thm:credulous-reasoning-SigmaP2}
      $\SigmaPtwo$-complete if $\CloneS_{1} \subseteq [B] \subseteq \CloneBF$ or $\CloneD \subseteq [B] \subseteq \CloneBF$,
      \item \label{thm:credulous-reasoning-DeltaP2}
      $\DeltaPtwo$-complete if $\CloneS_{11} \subseteq [B] \subseteq \CloneM$, 
      \item \label{thm:credulous-reasoning-coNP}
      $\co\NP$-complete if $X \subseteq [B] \subseteq \CloneR_1$ for $X \in \{\CloneS_{00}, \CloneS_{10}, \CloneD_2\}$, 
      \item \label{thm:credulous-reasoning-NP}
      $\NP$-complete if $[B] \in \{\CloneN,\CloneN_2,\CloneL,\CloneL_0,\CloneL_3\}$,
      \item \label{thm:credulous-reasoning-P}
      $\P$-complete if $\CloneV_2 \subseteq [B] \subseteq \CloneV$, $\CloneE_2 \subseteq  [B] \subseteq \CloneE$ or $[B]\in\{\CloneL_1,\CloneL_2\}$,
      and
      \item \label{thm:credulous-reasoning-NL}
      $\NL$-complete if $\CloneI_2 \subseteq [B] \subseteq \CloneI$. 
    \end{enumerate}
  \end{theorem}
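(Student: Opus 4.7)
My plan is to mirror the six-case split of Theorem~\ref{thm:extension-existence}, reading it against the classification of $\IMP(B)$ from Theorem~\ref{thm:implication}: intuitively $\CRED(B)$ combines the complexity of producing a candidate stable extension (via its generating defaults) with that of deciding implication against the result. For the upper bounds, case~\eqref{thm:credulous-reasoning-SigmaP2} comes from Theorem~\ref{thm:gottlob} together with Lemma~\ref{lem:cloneB-equivcd-cloneB+1}. For case~\eqref{thm:credulous-reasoning-DeltaP2}, Lemma~\ref{lem:extension-R1-M-unique} provides a unique extension that Algorithm~\ref{alg:extension-existence-dl} computes with $\co\NP$-oracle calls, and one further $\co\NP$-query decides $E\models\varphi$, placing $\CRED(B)$ in $\DeltaPtwo$. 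In case~\eqref{thm:credulous-reasoning-coNP}, $[B]\subseteq\CloneR_1$ makes justifications vacuous; the extension is unique, always consistent, and admits a polynomial-size witness (the generating defaults) whose validity together with the target query $E\models\varphi$ can be bundled into a single $\IMP(B)$-instance against $W\cup C_G$, which by Theorem~\ref{thm:implication} lies in $\co\NP$. For case~\eqref{thm:credulous-reasoning-NP} I would guess $G\subseteq D$ and verify it using the polynomial-time $\IMP(B)$-algorithm, exactly as in Lemma~\ref{lem:extension-existence-NP}. Cases~\eqref{thm:credulous-reasoning-P} and \eqref{thm:credulous-reasoning-NL} are handled deterministically by iteratively building the unique extension and running the corresponding polynomial-time or $\NL$ implication test.

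For the lower bounds I would use the natural reduction $\EXT(B)\leqcd\CRED(B)$ that holds whenever $\true$ is expressible (or becomes expressible after invoking Lemma~\ref{lem:cloneB-equivcd-cloneB+1}): asking whether some extension contains $\true$ is equivalent to asking whether an extension exists. This immediately yields the $\SigmaPtwo$-, $\NP$-, $\P$-, and $\NL$-hardness parts in cases~\eqref{thm:credulous-reasoning-SigmaP2}, \eqref{thm:credulous-reasoning-NP}, \eqref{thm:credulous-reasoning-P} and \eqref{thm:credulous-reasoning-NL}. When $\EXT(B)$ is trivial because $\false\notin[B]$, I would directly adapt the $\HGAP$-reduction inside Lemma~\ref{lem:extension-existence-P} and the $\GAP$-reduction inside Lemma~\ref{lem:extension-existence-NL}: drop the contradictory default whose consequent is $\false$ and instead query credulous entailment of the ``target'' proposition $p_t$. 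Finally, $\co\NP$-hardness in case~\eqref{thm:credulous-reasoning-coNP} follows by mapping any $\IMP(B)$-instance $(A,\varphi)$ to the default-free theory $\encoding{A,\emptyset}$ with query $\varphi$ and invoking Theorem~\ref{thm:implication} for the three listed minimal clones.

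The main obstacle I expect is the $\DeltaPtwo$-hardness of case~\eqref{thm:credulous-reasoning-DeltaP2}, which cannot be inherited from $\EXT(B)$ in a routine way because here the extension, when it exists, is unique and uninformative on its own. My plan is to reuse the $\SNSAT$-reduction of Lemma~\ref{lem:extension-existence-DeltaP2} but replace the contradictory default $\frac{\cdots:\true}{\false}$ by $\frac{\cdots:\true}{p}$ for a fresh propositional atom $p$: the theory now always has its unique stable extension, and $p\in E$ holds precisely when $\varphi^n_{[x_1/c_1,\ldots,x_{n-1}/c_{n-1}]}$ is unsatisfiable. Credulous entailment of $p$ therefore encodes the negation of the $\SNSAT$-answer, so $\overline{\SNSAT}\leqcd\CRED(B)$, and $\DeltaPtwo$-hardness follows by closure of $\DeltaPtwo$ under complement. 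Monotonicity is preserved throughout, and the associativity-tree rewriting from Lemma~\ref{lem:extension-existence-DeltaP2} keeps the reduction polynomial for any base $B$ with $\CloneS_{11}\subseteq[B]\subseteq\CloneM$.
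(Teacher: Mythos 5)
Your overall architecture (upper bounds from the extension-existence machinery plus the $\IMP(B)$ classification, lower bounds by reducing $\EXT(B)$ or $\IMP(B)$ to $\CRED(B)$) matches the paper's, and several of your cases are exactly the paper's argument. But there are two genuine gaps. First, the $\P$-hardness for $[B]\in\{\CloneL_1,\CloneL_2\}$ in case~\eqref{thm:credulous-reasoning-P} is not obtained by ``adapting the $\HGAP$-reduction and dropping the $\false$-default'': that reduction uses prerequisites of the form $\bigwedge_{v\in\src{e}}p_v$ (or the dual disjunctions), and neither conjunction nor disjunction is expressible in the affine clones $\CloneL_1,\CloneL_2$. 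The paper needs a dedicated gadget here: for each hyperedge $e$ with two sources it introduces a fresh proposition $p_e$, two defaults deriving $p_e$ from either source, and a default with prerequisite $p_{\mathrm{src}_1(e)}\xor p_{\mathrm{src}_2(e)}\xor p_e$, which becomes derivable exactly when both sources are. This idea is absent from your proposal, and without it that subcase is unproven.

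Second, your $\co\NP$ upper bound for case~\eqref{thm:credulous-reasoning-coNP} does not go through as stated. You cannot ``bundle'' the problem into a single $\IMP(B)$-instance against $W\cup C_G$, because writing down $C_G$ already requires computing the generating defaults $G$, and each applicability test $W\cup C_{G_i}\models\alpha$ is itself $\co\NP$-hard for the clones in question (e.g.\ $\CloneS_{00}\subseteq[B]\subseteq\CloneR_1$); iterating such tests only yields $\DeltaPtwo$. A correct $\co\NP$ argument must exhibit an $\NP$-certificate for non-membership, e.g.\ a set $D'\supseteq G$ that is a pre-fixed-point of the applicability operator (witnessed by countermodels $W\cup C_{D'}\not\models\alpha_d$ for each $d\notin D'$) together with a model of $W\cup C_{D'}$ falsifying $\varphi$; to be fair, the paper's own write-up of this item is equally terse and describes what is prima facie a $\DeltaPtwo$ procedure. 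Two smaller remarks: your worry that the $\DeltaPtwo$-hardness of case~\eqref{thm:credulous-reasoning-DeltaP2} ``cannot be inherited from $\EXT(B)$'' is unfounded\,---\,the paper simply asks whether $\true$ (resp.\ its surrogate variable $t$) lies in some extension, which is equivalent to extension existence\,---\,but your alternative, replacing the consequent $\false$ by a fresh atom $p$ and reducing from $\overline{\SNSAT}$, is also correct since $\DeltaPtwo$ is closed under complement. The remaining cases (the $\SigmaPtwo$, $\NP$, $\co\NP$-hardness via $\encoding{A,\emptyset}$, and $\NL$ parts) coincide with the paper's proof.
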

  
  The proof of Theorem~\ref{thm:credulous-reasoning} follows from the upper and lower bounds given in Propositions~\ref{prop:credulous-reasoning-upper-bounds} and \ref{prop:credulous-reasoning-lower-bounds} below.
  
  \begin{proposition} \label{prop:credulous-reasoning-upper-bounds}
    Let $B$ be a finite set of Boolean functions. Then $\CRED(B)$ is contained 
    \begin{enumerate}
      \item \label{upper-bounds-credulous-reasoning-SigmaP2}
      in $\SigmaPtwo$ if $\CloneS_{1} \subseteq [B] \subseteq
        \CloneBF$ or $\CloneD \subseteq [B] \subseteq \CloneBF$, 
      \item \label{upper-bounds-credulous-reasoning-DeltaP2}
      in $\DeltaPtwo$ if $[B] \subseteq \CloneM$, 
      \item \label{upper-bounds-credulous-reasoning-coNP}
      in $\co\NP$ if $[B] \subseteq \CloneR_1$, 
      \item \label{upper-bounds-credulous-reasoning-NP}
      in $\NP$ if $[B] \subseteq \CloneL$,
      \item \label{upper-bounds-credulous-reasoning-P}
      in $\P$ if $[B] \subseteq \CloneV$, 
      $[B] \subseteq \CloneE$ or $[B] \subseteq \CloneL_1$,
      and
      \item \label{upper-bounds-credulous-reasoning-NL}
      in $\NL$ if $[B] \subseteq \CloneI$. 
    \end{enumerate}
  \end{proposition}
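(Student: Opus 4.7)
The plan is to establish each of the six upper bounds by a tailored algorithm, with the trade-off being between the cost of producing a candidate stable extension and the cost of the implication test $\IMP(B)$ whose complexity is classified in Theorem~\ref{thm:implication}. Two of the claims fall directly out of a Gottlob-style guess-and-verify template (cf.\ Theorem~\ref{thm:gottlob}): nondeterministically guess a candidate set $G\subseteq D$ of generating defaults and use Theorem~\ref{thm:reiter-extensions} to verify in polynomially many $\IMP(B)$ queries that $\theorems{W\cup\{\gamma\mid d\in G\}}$ is a stable extension containing $\varphi$. For $\CloneS_{1}\subseteq[B]$ or $\CloneD\subseteq[B]$ the queries are in $\co\NP$, yielding $\SigmaPtwo$ as in claim~\eqref{upper-bounds-credulous-reasoning-SigmaP2}, and for $[B]\subseteq\CloneL$ they are in $\P$ by Theorem~\ref{thm:implication}, yielding $\NP$ as in claim~\eqref{upper-bounds-credulous-reasoning-NP}.

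The deterministic bounds rely on the uniqueness results of Lemma~\ref{lem:extension-R1-M-unique}. For claim~\eqref{upper-bounds-credulous-reasoning-DeltaP2} ($[B]\subseteq\CloneM$), the candidate extension is unique and can be computed by Algorithm~\ref{alg:extension-existence-dl} from the proof of Lemma~\ref{lem:extension-existence-DeltaP2} using a $\co\NP$-oracle; an additional $\co\NP$-query tests $\varphi\in E$, placing the problem in $\DeltaPtwo$. For claim~\eqref{upper-bounds-credulous-reasoning-P} ($[B]\subseteq\CloneV,\CloneE$, or $\CloneL_1$) each of these clones lies in $\CloneM$ or $\CloneR_1$ (hence unique extension) and satisfies $\IMP(B)\in\P$ by Theorem~\ref{thm:implication}, so the same scheme runs in polynomial time without oracle. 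For claim~\eqref{upper-bounds-credulous-reasoning-NL} ($[B]\subseteq\CloneI$) all formulas are literals or constants and both the extension computation and the test $\varphi\in E$ reduce to directed graph reachability via essentially the construction from the proof of Lemma~\ref{lem:extension-existence-NL}; membership in $\NL$ follows by closure under complementation.

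The hard case is claim~\eqref{upper-bounds-credulous-reasoning-coNP} ($[B]\subseteq\CloneR_1$), where iterated $\co\NP$-queries would only deliver $\DeltaPtwo$. The plan is to bypass the iteration by characterising $E$ as an intersection. Since every formula is $\true$-reproducing, no $\neg\beta$ is $\true$-reproducing and hence none lies in the (consistent) unique extension $E$, so justifications are irrelevant and $E = \theorems{W\cup\{\gamma_d\mid d\in G^*\}}$, where $G^*$ is the least fixed point of the monotone operator $H(G):=\{d\in D\mid W\cup\{\gamma_{d'}\mid d'\in G\}\models\alpha_d\}$. Calling $G\subseteq D$ \emph{closed} when $H(G)\subseteq G$, Knaster--Tarski yields that $G^*$ is the $\subseteq$-least closed set, and consequently
\[
  E \;=\; \bigcap_{G\ \mathrm{closed}} \theorems{W\cup\{\gamma_d\mid d\in G\}}.
\]
Therefore $\varphi\notin E$ admits a polynomial-size certificate in $\NP$: a closed $G\subseteq D$, an assignment $\sigma\models W\cup\{\gamma_d\mid d\in G\}$ with $\sigma\not\models\varphi$, and, for each $d\notin G$, a countermodel $\tau_d\models W\cup\{\gamma_{d'}\mid d'\in G\}$ with $\tau_d\not\models\alpha_d$ witnessing $d\notin H(G)$. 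Every verification step reduces to evaluating a $B$-formula on an explicit assignment and is therefore polynomial-time, so $\overline{\CRED(B)}\in\NP$ and hence $\CRED(B)\in\co\NP$.
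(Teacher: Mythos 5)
Your treatment of parts \ref{upper-bounds-credulous-reasoning-SigmaP2}, \ref{upper-bounds-credulous-reasoning-DeltaP2}, \ref{upper-bounds-credulous-reasoning-NP}, \ref{upper-bounds-credulous-reasoning-P} and \ref{upper-bounds-credulous-reasoning-NL} coincides with the paper's proof: Gottlob-style guess-and-verify for the $\SigmaPtwo$ and $\NP$ bounds, Algorithm~\ref{alg:extension-existence-dl} (with or without a $\co\NP$-oracle for the implication tests of Theorem~\ref{thm:implication}) for the $\DeltaPtwo$ and $\P$ bounds, and the reachability graph of Lemma~\ref{lem:extension-existence-NL} for $\NL$. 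The genuine divergence is part \ref{upper-bounds-credulous-reasoning-coNP}. The paper handles $[B]\subseteq\CloneR_1$ in two sentences: justifications are irrelevant, the stable extension $E$ is unique and consistent (Lemma~\ref{lem:extension-R1-M-unique}), and one ``iteratively computes the generating defaults using Algorithm~\ref{alg:extension-existence-dl} and then checks implication.'' Read literally, that is a deterministic computation with adaptively chosen $\co\NP$-oracle queries, i.e.\ a $\DeltaPtwo$ procedure, and the paper does not spell out how to collapse it to $\co\NP$. You supply exactly the missing ingredient: because the operator $H$ is monotone, Knaster--Tarski identifies the set of generating defaults with the least closed set, whence $E=\bigcap_{G\ \mathrm{closed}}\theorems{W\cup\{\gamma_d\mid d\in G\}}$, and non-membership of $\varphi$ then has a purely existential, polynomial-size certificate (a closed $G$, one countermodel refuting $\varphi$, and one countermodel per default outside $G$ witnessing closedness), each piece verifiable by evaluating $B$-formulae on explicit assignments. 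The argument is sound: consistency of $E$ (hence irrelevance of justifications) follows since every $\CloneR_1$-formula is $\true$-reproducing, soundness of the certificate uses $E\subseteq\theorems{W\cup\{\gamma_d\mid d\in G\}}$ for every closed $G$, and completeness uses that the least closed set is itself closed. So your route is not merely different but arguably more complete than the paper's sketch; the only step worth making explicit is the routine induction identifying the least fixed point of $H$ with the generating defaults arising from Reiter's iterative characterization in Theorem~\ref{thm:reiter-extensions}.
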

  
  \begin{proof}
    Part one follows from
    Theorem~\ref{thm:gottlob} and Lemma~\ref{lem:cloneB-equivcd-cloneB+1}. 
    
    For $[B] \subseteq \CloneM$, 
    membership in $\DeltaPtwo$ is obtained from a straightforward 
    extension of Algorithm~\ref{alg:extension-existence-dl}: 
    first iteratively compute the applicable defaults $G$ 
    while asserting that $\encoding{W,D}$ has a stable extension using 
    Algorithm~\ref{alg:extension-existence-dl}, and eventually verify 
    that $\varphi$ is implied by $W$ and the conclusions in $G$.
    
    For $[B]\subseteq \CloneR_1$, the justifications $\beta$ are
    irrelevant for computing a stable extension, as for every default rule
    $\frac{\alpha:\beta}{\gamma}\in D$ we cannot derive 
    $\lnot\beta$ ($\lnot\beta$ is not $\true$-reproducing).
    Hence, a unique consistent stable extension $E$ is guaranteed to exist by 
    Theorem~\ref{lem:extension-R1-M-unique}.
    Using Algorithm~\ref{alg:extension-existence-dl}
    we can iteratively compute the generating defaults of $E$ of the unique 
    consistent stable extension of $\encoding{W,D}$ and eventually
    check whether $\varphi$ is implied by $W$ and the conclusions in of the generating defaults of $E$.
    
    For $[B] \subseteq \CloneL$, we proceed similarly as in the proof
    of part~\ref{thm:extension-existence-NP-complete} in Theorem~\ref{thm:extension-existence}. First, we guess 
    a set $G$ of generating defaults and subsequently verify that both
    $\theorems{W \cup \{\gamma \mid \frac{\alpha:\beta}{\gamma} \in G\}}$ is a
    stable extension and that $W \cup \{\gamma \mid
    \frac{\alpha:\beta}{\gamma} \in G\}\models\varphi$. Using
    Theorem~\ref{thm:implication}, both
    conditions may be verified in polynomial time.
    
    For $[B] \subseteq \CloneV$, $[B] \subseteq \CloneE$,  
    and $[B]\subseteq \CloneL_1$, we again use 
    Algorithm~\ref{alg:extension-existence-dl}. 
    As for these types of $B$-formulae
    we have an efficient test for implication 
    (Theorem~\ref{thm:implication}), we get $\CRED(B)\in \P$.

    For $[B]\subseteq \CloneI$, 
    observe that $\NL$ is closed under intersection.
    Hence, given a $B$-default theory $\encoding{W,D}$ and a $B$-formula $\varphi$ we can first 
    test whether $\encoding{W,D}$ has a stable extension $E$ using Lemma~\ref{lem:extension-existence-NL} and 
    subsequently assert that $\varphi \in E$ by reusing the graph $G_{\encoding{W,D}}$ constructed from $\encoding{W,D}$: 
    it holds that $\varphi \in E$ if and only if the node corresponding to $\varphi$ is contained in $G_{\encoding{W,D}}$ and reachable from the node $\true$. Thus, $\CRED(B) \in \NL$.
  \end{proof}

  We will now establish the lower bounds required to complete the proof of Theorem \ref{thm:credulous-reasoning}. 
  
  \begin{proposition} \label{prop:credulous-reasoning-lower-bounds}
    Let $B$ be a finite set of Boolean functions. 
    Then $\CRED(B)$ is 
    \begin{enumerate}
      \item \label{prop:credulous-reasoning-lower-bounds-sigmaP2}
      $\SigmaPtwo$-hard if $\CloneS_{1} \subseteq [B]$ or $\CloneD \subseteq [B]$,
      \item \label{lem:credulous-reasoning-lower-bounds-deltaP2}
      $\DeltaPtwo$-hard if $\CloneS_{11} \subseteq [B]$,
      \item \label{prop:credulous-reasoning-lower-bounds-coNP}
      $\co\NP$-hard if $\CloneS_{00} \subseteq [B]$, $\CloneS_{10} \subseteq [B]$ or $\CloneD_2 \subseteq [B]$, 
      \item \label{prop:credulous-reasoning-lower-bounds-NP}
      $\NP$-hard if $\CloneN_2 \subseteq [B]$ or $\CloneL_0 \subseteq [B]$,
      \item \label{prop:credulous-reasoning-lower-bounds-P}
      $\P$-hard if $\CloneV_2 \subseteq [B]$, 
      $\CloneE_2 \subseteq  [B]$ or $\CloneL_2 \subseteq [B]$,
      and
      \item \label{prop:credulous-reasoning-lower-bounds-NL}
      $\NL$-hard for all other clones.
    \end{enumerate}
  \end{proposition}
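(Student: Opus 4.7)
The plan is to establish each hardness claim by a reduction, using the previously proved hardness results for $\EXT(B)$ (Theorem~\ref{thm:extension-existence}) and $\IMP(B)$ (Theorem~\ref{thm:implication}) wherever possible, and resorting to direct constructions only where these do not suffice.

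First I would set up two meta-reductions. \emph{Reduction (A)} shows $\EXT(B) \leqcd \CRED(B)$ via $\encoding{W,D} \mapsto (\true, \encoding{W,D})$: every stable extension is deductively closed and hence contains $\true$, so the target $\true$ is credulously entailed exactly when an extension exists; combined with Lemma~\ref{lem:cloneB-equivcd-cloneB+1}, this removes the need to have $\true \in B$ itself. Reduction~(A) immediately discharges claim~\ref{prop:credulous-reasoning-lower-bounds-sigmaP2} from Theorem~\ref{thm:extension-existence}(\ref{thm:extension-existence-SigmaP2}), claim~\ref{lem:credulous-reasoning-lower-bounds-deltaP2} from item~(\ref{thm:extension-existence-DeltaP2}), claim~\ref{prop:credulous-reasoning-lower-bounds-NP} from item~(\ref{thm:extension-existence-NP-complete}), and the subcases of claim~\ref{prop:credulous-reasoning-lower-bounds-P} where $[B]$ also contains $\CloneV_0$, $\CloneE_0$, or $\CloneL_0$. \emph{Reduction~(B)} maps $(A,\varphi) \mapsto (\varphi, \encoding{A, \emptyset})$; the empty default set yields the unique extension $\theorems{A}$, so credulous entailment of $\varphi$ is equivalent to $A \models \varphi$, delivering claim~\ref{prop:credulous-reasoning-lower-bounds-coNP} via Theorem~\ref{thm:implication}.

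The remaining lower bounds are direct reductions. For claim~\ref{prop:credulous-reasoning-lower-bounds-NL} I would map a $\GAP$-instance $(G,s,t)$ to $W=\{p_s\}$, $D = \{\frac{p_u : p_u}{p_v} : (u,v) \in F\}$, and query $p_t$; only projections appear, so this is valid in $\CloneI_2 \subseteq [B]$ for every $B$. The still-open subcases of claim~\ref{prop:credulous-reasoning-lower-bounds-P} are those with $[B] \subseteq \CloneR_1$, where $\EXT(B)$ is trivial and Reduction~(A) is useless. For $\CloneE_2 \subseteq [B] \subseteq \CloneR_1$ I would reuse the $\HGAP$-reduction underlying Lemma~\ref{lem:extension-existence-P} but drop its terminal $\false$-default, taking $W=\{p_s : s \in S\}$, $D = \{\frac{\bigwedge_{v \in \src{e}} p_v : \true}{p_{\dest{e}}} : e \in F\}$, and querying $p_t$. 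For $\CloneV_2 \subseteq [B] \subseteq \CloneR_1$ I would similarly reuse the dualised disjunctive encoding of that lemma without the $\false$-default, and instead ask whether $\bigvee_{v \in V \setminus \{t\}} p_v$ is credulously entailed. For $\CloneL_2 \subseteq [B] \subseteq \CloneR_1$ I would invoke Lemma~\ref{lem:cloneB-equivcd-cloneB+1} to work within $\CloneL_1$, where $\equiv$ is available, and reduce monotone circuit value: each true input is placed in $W$, each OR-gate $g = a \vee b$ is encoded by two defaults $\frac{p_a : \true}{p_g}$ and $\frac{p_b : \true}{p_g}$, and each AND-gate $g = a \wedge b$ by a single default $\frac{p_a \equiv p_b : \true}{p_g}$.

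The main obstacle is the $\CloneL_2$ subcase, as the linear fragment appears ill-suited to simulate the non-linear AND-operation. The crucial observation is that, because justifications are irrelevant in $\CloneR_1$ (Lemma~\ref{lem:extension-R1-M-unique}), the unique extension is generated by iteratively adding single-atom consequents; if $T$ denotes the set of such derived atoms, then $\theorems{T} \models p_a \equiv p_b$ holds precisely when $\{a,b\} \subseteq T$ (or $a=b$). Thus the $\equiv$-premise acts exactly as a conjunction of $p_a$ and $p_b$ in the reachability iteration, making the monotone-circuit-value encoding correct; correctness then follows by induction along a topological order of the circuit. The polynomial size blowup arising when re-expressing $\land$, $\lor$, or $\equiv$ over a general base $B$ is controlled by the same logarithmic-depth-rewriting argument used at the end of the proof of Lemma~\ref{lem:extension-existence-DeltaP2}.
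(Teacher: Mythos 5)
Your proposal is correct and, for most items, follows the paper's own route: the paper likewise derives item~2 from the observation that $\true$ lies in every stable extension (your Reduction~(A)), item~3 from the theory $\encoding{A,\emptyset}$ with unique extension $\theorems{A}$ (your Reduction~(B)), item~5 for $\CloneE_2$ and $\CloneV_2$ by adapting the $\HGAP$ constructions of Lemma~\ref{lem:extension-existence-P} (dropping the terminal $\false$-default and querying the appropriate atom or disjunction), and item~6 by exactly your $\GAP$ reduction. Two points differ. First, for item~4 the paper does not use the $\true$-query; it instead adds a fresh satisfiable $B$-formula $\psi$ to $W$ in the $\ThreeSAT$ reduction and queries $\psi$. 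Your uniform chain $\EXT(B)\leqcd\CRED(B\cup\{\true\})\leqcd\CRED(B)$ is equally valid and arguably cleaner, since it handles items~1, 2 and~4 at once. Second, and more substantially, for $[B]\in\{\CloneL_1,\CloneL_2\}$ the paper stays with $\HGAP$ and simulates the conjunction of two source atoms by the ternary premise $p_{\mathrm{src}_1(e)}\xor p_{\mathrm{src}_2(e)}\xor p_{e}$, where $p_e$ is fed by two auxiliary defaults; you reduce from the monotone circuit value problem and simulate AND-gates by the premise $p_a\equiv p_b$. Both constructions rest on the same key insight---a non-tautological linear formula is implied by a set of atoms if and only if all of its variables are among them---so they are interchangeable; yours needs $\equiv$ and hence the detour through $\CloneL_1$, while the paper's gadget lives in $[\{x\xor y\xor z\}]$ plus $\true$, a negligible difference since both invoke Lemma~\ref{lem:cloneB-equivcd-cloneB+1} anyway to eliminate $\true$ from the justifications.

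One small repair is needed in your AND-gate gadget: if a gate has the form $g=a\wedge a$, the premise $p_a\equiv p_a$ is a tautology and $p_g$ is derived even when $a$ evaluates to $\false$. You flag the case $a=b$ in your analysis of the implication test but do not exclude it from the reduction; either normalize the circuit so that the two inputs of every AND-gate are distinct (the monotone circuit value problem remains $\P$-complete under $\leqcd$ for such circuits) or encode such gates by a default with premise $p_a$ alone. With that caveat the argument goes through.
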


  \begin{proof}
    Part one follows from
    Theorem~\ref{thm:gottlob} and Lemma~\ref{lem:cloneB-equivcd-cloneB+1}.
    
    For the second part, observe that the constant $\true$ is contained in any stable extension. 
    The second part thus follows from Lemmas~\ref{lem:cloneB-equivcd-cloneB+1} and~\ref{lem:extension-existence-DeltaP2}.

    For $\CloneS_{00} \subseteq [B]$, $\CloneS_{10} \subseteq [B]$, and
    $\CloneD_2 \subseteq [B]$, 
    $\co\NP$-hardness is established by
    a $\leqcd$-reduction from $\IMP(B)$. 
    Let $A\subseteq \allFormulae(B)$ and
    $\varphi\in\allFormulae(B)$. 
    Then the default theory $\encoding{A,\emptyset}$ has the
    unique stable extension $\theorems{A}$, and hence 
    $A \models \varphi$ if and only if $(\encoding{A,\emptyset},\varphi)\in \CRED(B)$.
    Therefore, $\IMP(B) \leqcd \CRED(B)$, and the claim follows with
    Theorem~\ref{thm:implication}. 

    For the fourth part, it suffices to prove $\NP$-hardness for
    $\CloneN_2 \subseteq [B]$. 
    For $\CloneL_0 \subseteq [B]$, the claim then follows by
    Lemma~\ref{lem:cloneB-equivcd-cloneB+1}.
    For $\CloneN_2 \subseteq [B]$, we obtain $\NP$-hardness of $\CRED(B)$ 
    by adjusting the reduction given in the
    proof of item~\ref{thm:extension-existence-NP-complete} of
    Theorem~\ref{thm:extension-existence}.
    Consider the mapping $\varphi \mapsto (\encoding{\{\psi\},D_\varphi},\psi)$, where
    $D_\varphi$ is the set of default rules constructed from $\varphi$ in
    Theorem~\ref{thm:extension-existence}, and $\psi$ is a satisfiable
    $B$-formula such that $\varphi$ and
    $\psi$ do not use common variables. 
    By Theorem~\ref{thm:extension-existence}, $\varphi \in \ThreeSAT$
    if and only if $\encoding{\{\psi\},D_\varphi}$ has a stable extension. As any
    extension of $\encoding{\{\psi\},D_\varphi}$ contains $\psi$, 
    we obtain $\ThreeSAT \leqcd \CRED(B)$ via the above reduction.
    
    For the fifth part, the cases $\CloneE_2 \subseteq [B]$ and 
    $\CloneV_2 \subseteq [B]$ follow similarly from
    Lemmas~\ref{lem:cloneB-equivcd-cloneB+1} and~\ref{lem:extension-existence-P}.
    It hence suffices to prove the $\P$-hardness for $[B]\in\{\CloneL_1,\CloneL_2\}$
    We again provide a reduction from $\HGAP$ restricted to hypergraphs 
    whose edges contain at most two source nodes.
    To this end, we transform a given instance $(H,S,t)$ with $H=(V,F)$, to the 
    $\CRED(\{x \xor y \xor z,\true\})$-instance $(\encoding{W,D},\varphi)$, where
    \begin{align*}
      W &:= \{ p_s \mid s \in S \}, \\
      D &:= \bigg\{ \frac{p_{ \src{e}} : \true }
                         {p_{ \dest{e}}} 
                    \,\bigg\arrowvert\, e \in F, |\src{e}|=1 \bigg\}\,\cup \\
         &\hphantom{\;:=\;} 
            \bigg\{ \frac{p_{\mathrm{src}_1(e)}  :  \true }
                         {p_{e}},
                    \frac{p_{\mathrm{src}_2(e)}  : \true }
                         {p_{e}}, 
                    \frac{p_{\mathrm{src}_1(e)} \xor p_{\mathrm{src}_2(e)} \xor p_{e}  : \true }
                         {p_{\dest{e}}}
                    \,\bigg\arrowvert\, e \in F, |\src{e}|=2 \bigg\},\\
      \varphi&:=p_t,
    \end{align*}
    and $\{\mathrm{src}_1(e), \mathrm{src}_2(e)\}$ denote the source nodes of $e$.
    As for the correctness, observe that if for some $e \in F$ with $|\src{e}|=2$
    both variables $p_{\mathrm{src}_1(e)}$ and $p_{\mathrm{src}_2(e)}$ can be derived from the 
    stable extension of $\encoding{W,D}$, then $p_{e}$ and consequently $p_{\dest{e}}$ 
    can be derived. Conversely, if $\mathrm{src}_1(e)$ or $\mathrm{src}_2(e)$ cannot 
    be derived, then either none or two of the propositions in 
    $p_{\mathrm{src}_1(e)} \xor p_{\mathrm{src}_2(e)} \xor p_{e}$ are satisfied.
    Thus $p_{\dest{e}}$ cannot be derived from the defaults corresponding to $e$.

    Finally, it remains to show $\NL$-hardness for $\CloneI_2 \subseteq [B]$.
    We give a $\leqcd$-reduction from $\GAP$ to $\CRED(\{\id\})$. 
    For a directed graph $G=(V,F)$ and two nodes $s,t\in V$, we
    transform the $\GAP$-instance $(G,s,t)$ with $G=(V,F)$ to the $\CRED(\CloneI_2)$-instance
    \[
      W       := \{p_s\}, \ 
      D       := \bigg\{ \frac{p_u : p_u}{ p_v } \,\bigg\arrowvert\,
                         (u,v) \in F \bigg\}, \ 
      \varphi:=p_t.
    \]
    Clearly, $(G,s,t) \in \GAP$ if and only if $\varphi$ is contained in all stable extensions of $\encoding{W,D}$.
  \end{proof}

  This completes the proof of Theorem \ref{thm:credulous-reasoning}.

  We will next classify the complexity of the skeptical reasoning
  problem. The analysis as well as the result are similar to the classification of
  the credulous reasoning problem (cf.\ also Fig.~\ref{fig:credulous-skeptical-reasoning}).

    \begin{theorem} \label{thm:skeptical-reasoning}
    Let $B$ be a finite set of Boolean functions. Then $\SKEP(B)$ is 
    \begin{enumerate}
      \item \label{thm:skeptical-reasoning-PiP2}
        $\PiPtwo$-complete if $\CloneS_{1} \subseteq [B] \subseteq
        \CloneBF$ or $\CloneD \subseteq [B] \subseteq \CloneBF$, 
      \item \label{thm:skeptical-reasoning-DeltaP2}
      $\DeltaPtwo$-complete if $\CloneS_{11} \subseteq [B] \subseteq \CloneM$, 
      \item \label{thm:skeptical-reasoning-coNP}
        $\co\NP$-complete if $X \subseteq [B] \subseteq Y$, where 
        $X \in \{\CloneS_{00}, \CloneS_{10}, \CloneN_2,\CloneL_0\}$
        and $Y \in \{\CloneR_1,\CloneM, \CloneL \}$,  
      \item \label{thm:skeptical-reasoning-P}
        $\P$-complete if $\CloneV_2 \subseteq [B] \subseteq \CloneV$,
        $\CloneE_2 \subseteq  [B] \subseteq \CloneE$ or
        $[B]\in\{\CloneL_1,\CloneL_2\}$, 
        and
      \item \label{thm:skeptical-reasoning-NL}
        $\NL$-complete if $\CloneI_2 \subseteq [B] \subseteq \CloneI$. 
    \end{enumerate}
  \end{theorem}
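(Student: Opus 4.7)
The plan is to mirror the proof of Theorem~\ref{thm:credulous-reasoning}, exploiting the fact that $\SKEP(B) \equivcd \CRED(B)$ whenever stable extensions are unique (Corollary~\ref{cor:skep(b)-equiv-cred(b)}), and handling the non-unique case separately.

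For the upper bounds, items~1, 2, 4 and 5 follow from straightforward adaptations of the corresponding parts of Proposition~\ref{prop:credulous-reasoning-upper-bounds}. The $\PiPtwo$ bound comes from Theorem~\ref{thm:gottlob} together with Lemma~\ref{lem:cloneB-equivcd-cloneB+1}. For $[B] \subseteq \CloneM$, Lemma~\ref{lem:extension-R1-M-unique} guarantees at most one stable extension, and Algorithm~\ref{alg:extension-existence-dl} equipped with a $\co\NP$-oracle both tests for its existence and produces a generator; membership of $\varphi$ is then decided by one further oracle query, and if no stable extension exists the skeptical answer is vacuously \emph{true}. For the clones in items~4 and 5, extensions are again unique so $\SKEP(B)$ and $\CRED(B)$ coincide, and the algorithms from Proposition~\ref{prop:credulous-reasoning-upper-bounds} apply verbatim. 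The only genuinely new situation is the $\co\NP$ bound for $[B] \subseteq \CloneL$: there the complement problem lies in $\NP$ by guessing a set $G$ of generating defaults, verifying via Theorem~\ref{thm:reiter-extensions} that $G$ indeed generates a stable extension (all required implications being polynomial-time by Theorem~\ref{thm:implication}, since $B$ is affine), and checking that $\varphi$ is \emph{not} implied by the resulting generators.

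For the lower bounds, items~1, 2, 4 and 5 transfer from the corresponding parts of Proposition~\ref{prop:credulous-reasoning-lower-bounds}: item~1 from Theorem~\ref{thm:gottlob} together with Lemma~\ref{lem:cloneB-equivcd-cloneB+1}, and items~2, 4 and 5 via Corollary~\ref{cor:skep(b)-equiv-cred(b)} combined with closure of the relevant complexity classes under complementation. For the $\co\NP$-hardness of item~3, whenever one of $\CloneS_{00}$, $\CloneS_{10}$ or $\CloneD_2$ is contained in $[B]$, the reduction $(A,\varphi) \mapsto (\encoding{A,\emptyset},\varphi)$ from $\IMP(B)$ works exactly as in the credulous case, since $\theorems{A}$ is the unique stable extension of $\encoding{A,\emptyset}$ and skeptical reasoning over a unique extension coincides with implication.

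The main obstacle is $\co\NP$-hardness when $\CloneN_2 \subseteq [B]$ or $\CloneL_0 \subseteq [B]$, since here $\IMP(B)$ is tractable, so the hardness must be extracted from the difficulty of \emph{existence} of stable extensions rather than from implication. I plan to exploit vacuous quantification: recall from the proof of Lemma~\ref{lem:extension-existence-NP} the $\leqcd$-reduction $\varphi \mapsto \encoding{W, D_\varphi}$ with the property that $\encoding{W, D_\varphi}$ admits a stable extension if and only if the $3$-CNF formula $\varphi$ is satisfiable. Introducing a fresh propositional variable $p$ not occurring in $D_\varphi$, I intend to use the mapping $\varphi \mapsto (\encoding{W, D_\varphi},\, p)$ as a reduction from $\overline{\ThreeSAT}$ to $\SKEP(B)$: if $\varphi$ is unsatisfiable there is no stable extension at all, so $p$ lies vacuously in every extension; whereas if $\varphi$ is satisfiable, then $p$ is not contained in any extension, since $p \notin W$ and $p$ is not the consequent of any default. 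For $\CloneL_0 \subseteq [B]$, Lemma~\ref{lem:cloneB-equivcd-cloneB+1} removes the constant $\true$ appearing in the defaults of $D_\varphi$, completing the reduction.
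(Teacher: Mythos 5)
Your proposal is correct and follows essentially the same route as the paper: the $\PiPtwo$ case via Theorem~\ref{thm:gottlob} and Lemma~\ref{lem:cloneB-equivcd-cloneB+1}, the affine cases via a guess-and-verify $\co\NP$ upper bound together with the vacuous-satisfaction reduction from $\overline{\ThreeSAT}$ using a formula disjoint from the variables of $D_\varphi$ (the paper uses a variable-disjoint $B$-formula $\psi$ where you use a fresh variable $p$ -- the same idea), and all remaining cases via Corollary~\ref{cor:skep(b)-equiv-cred(b)} and the classification of $\CRED(B)$.
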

  
  \begin{proof}
    The first part again follows from
    Theorem~\ref{thm:gottlob} and Lemma~\ref{lem:cloneB-equivcd-cloneB+1}.
    
    For $[B] \in \{\CloneN,\CloneN_2,\CloneL,\CloneL_0,\CloneL_3\}$,
    we guess similarly as in Theorem~\ref{thm:extension-existence} 
    a set $G$ of defaults and then verify in the same way whether $W$
    and $G$ generate a stable extension $E$. If not, then we
    accept. Otherwise, we check if $E\models\varphi$ and answer according
    to this test. 
    This yields a  $\co\NP$-algorithm for $\SKEP(B)$.
    Hardness for $\co\NP$ is achieved by modifying the reduction from
    Theorem~\ref{thm:extension-existence} (cf.\ also the proof of
    Proposition~\ref{prop:credulous-reasoning-lower-bounds}): map 
    $\varphi$ to $(\encoding{\emptyset,D_\varphi},\psi)$, where
    $D_\varphi$ is defined as in the proof of
    Theorem~\ref{thm:extension-existence}, and $\psi$ is a 
    $B$-formula such that $\varphi$ and
    $\psi$ do not share variables. 
    Then $\varphi \notin \ThreeSAT$
    if and only if $\encoding{\emptyset,D_\varphi}$ does not have a stable
    extension.
    The latter is true if and only if $\psi$ is in all extensions
    of $\encoding{\emptyset,D_\varphi}$.
    Hence $\overline{\ThreeSAT}\leqcd \SKEP(B)$, establishing the claim.

    For all remaining clones $B$, observe that $[B] \subseteq
    \CloneR_1$ or $[B] \subseteq \CloneM$. Hence,
    Corollary~\ref{cor:skep(b)-equiv-cred(b)} and
    Theorem~\ref{thm:credulous-reasoning} imply the claim.   
  \end{proof}

\section{Conclusion} \label{sec:conclusion}

  In this paper we provided a complete classification of the
  complexity of the main reasoning problems for default propositional
  logic, one of the most common frameworks for nonmonotonic reasoning.
  The complexity of the extension existence problem shows an
  interesting similarity to the complexity of the satisfiability
  problem \cite{lew79}, because in both cases the 
  hardest instances lie above the clone $\CloneS_1$ (with the
  exception that instances from $\CloneD$ are still hard for
  $\EXT$, but easy for $\SAT$).
  The complexity of the membership problems, \emph{i.e.}, credulous and skeptical
  reasoning, rests on two sources: first, whether there exists a
  unique extension (cf.\ Lemma~\ref{lem:extension-R1-M-unique}), and
  second, how hard it is to test for formula implication. 
  For this reason,  we also classified the complexity of the
  implication problem $\IMP(B)$.

A different complexity classification of reasoning for default logic has been undertaken in \cite{chhesc07}. In that paper, the language of  existentially quantified propositional logic was restricted to so called conjunctive queries, \emph{i.e.}, existentially quantified formulae in conjunctive normal-form with generalized clauses. The complexity of the reasoning tasks was determined depending on the type of clauses that are allowed. We want to remark that though our approach at first sight seems to be more general (since we do not restrict our formulae to CNF), the results in \cite{chhesc07} do not follow from the results presented here (and vice versa, our results do not follow from theirs).
  
  In the light of our present contribution, it is interesting to remark that by
  results of Konolige, Gottlob, and Janhunen \cite{konolige:88,gottlob:95,janhunen:99}, propositional default logic and 
  Moore's autoepistemic logic are essentially equivalent.
  Even more, the translations are efficiently
  computable. Unfortunately, all of them require a complete
  set of Boolean connectives, whence our results do not immediately 
  transfer to autoepistemic logic. It is nevertheless interesting to
  ask whether the exchange of default rules with the introspective
  operator $L$ yields hitherto unclassified fragments of autoepistemic logic 
  that allow for efficient stable expansion testing and reasoning.

\subsection*{Acknowledgements}
  We thank Ilka Schnoor for sending us a manuscript with the proof of
  the $\NP$-hardness of $\EXT(B)$ for all $B$ such that $\CloneN_2 \subseteq [B]$. 
  We also acknowledge helpful discussions
  on various topics of this paper with Peter Lohmann.

\bibliographystyle{alpha}
\bibliography{thi-hannover}

\begin{thebibliography}{BMTV09b}

\bibitem[BCRV03]{bcrv03}
E.~B{\"o}hler, N.~Creignou, S.~Reith, and H.~Vollmer.
\newblock Playing with {B}oolean blocks, part {I}: Post's lattice with
  applications to complexity theory.
\newblock {\em SIGACT News}, 34(4):38--52, 2003.

\bibitem[BEZ02]{ben-eliyahu-zohary:2002}
R.~Ben-Eliyahu-Zohary.
\newblock Yet some more complexity results for default logic.
\newblock {\em Artificial Intelligence}, 139(1):1--20, 2002.

\bibitem[BMTV09a]{bemethvo08imp}
O.~Beyersdorff, A.~Meier, M.~Thomas, and H.~Vollmer.
\newblock The complexity of propositional implication.
\newblock {\em Information Processing Letters}, 109(18):1071--1077, 2009.

\bibitem[BMTV09b]{BMTV09a}
O.~Beyersdorff, A.~Meier, M.~Thomas, and H.~Vollmer.
\newblock The complexity of reasoning for fragments of default logic.
\newblock In {\em Proc.\ 12th International Conference on Theory and
  Applications of Satisfiability Testing}, volume 5584 of {\em Lecture Notes in
  Computer Science}, pages 51 -- 64. Springer Verlag, 2009.

\bibitem[BO02]{bonoli02}
P.~A. Bonatti and N.~Olivetti.
\newblock Sequent calculi for propositional nonmonotonic logics.
\newblock {\em ACM Transactions on Computational Logic}, 3(2):226--278, 2002.

\bibitem[CHS07]{chhesc07}
P.~Chapdelaine, M.~Hermann, and I.~Schnoor.
\newblock Complexity of default logic on generalized conjunctive queries.
\newblock In {\em Proc.\ 9th International Conference on Logic Programming and
  Nonmonotonic Reasoning}, volume 4483 of {\em Lecture Notes in Computer
  Science}, pages 58--70. Springer Verlag, 2007.

\bibitem[Got92]{gottlob:92}
G.~Gottlob.
\newblock Complexity results for nonmonotonic logics.
\newblock {\em Journal of Logic Computation}, 2(3):397--425, 1992.

\bibitem[Got95a]{gottlob:95b}
G.~Gottlob.
\newblock {NP} trees and {C}arnap's modal logic.
\newblock {\em Journal of the ACM}, 42(2):421--457, 1995.

\bibitem[Got95b]{gottlob:95}
G.~Gottlob.
\newblock Translating default logic into standard autoepistemic logic.
\newblock {\em Journal of the ACM}, 42(4):711--740, 1995.

\bibitem[Jan99]{janhunen:99}
T.~Janhunen.
\newblock On the intertranslatability of non-monotonic logics.
\newblock {\em Annals of Mathematics and Artificial Intelligence},
  27(1-4):79--128, 1999.

\bibitem[Kon88]{konolige:88}
K.~Konolige.
\newblock On the relation between default and autoepistemic logic.
\newblock {\em Artificial Intelligence}, 35(3):343--382, 1988.
\newblock Erratum: Artificial Intelligence, 41(1):115.

\bibitem[KS91]{kautz-selman:91}
H.~A. Kautz and B.~Selman.
\newblock Hard problems for simple default logics.
\newblock {\em Artificial Intelligence}, 49:243--279, 1991.

\bibitem[Lew79]{lew79}
H.~Lewis.
\newblock Satisfiability problems for propositional calculi.
\newblock {\em Mathematical Systems Theory}, 13:45--53, 1979.

\bibitem[McC80]{mccarthy:80}
J.~McCarthy.
\newblock Circumscription -- a form of non-monotonic reasoning.
\newblock {\em Artificial Intelligence}, 13:27--39, 1980.

\bibitem[MT93]{matr93}
V.~W. Marek and M.~Truszczy{\'n}ski.
\newblock {\em Nonmonotonic Logic}.
\newblock Artificial Intelligence. Springer Verlag, Berlin Heidelberg, 1993.

\bibitem[Pap94]{pap94}
C.~H. Papadimitriou.
\newblock {\em Computational Complexity}.
\newblock Addison-Wesley, Reading, MA, 1994.

\bibitem[Pip97]{pip97b}
N.~Pippenger.
\newblock {\em Theories of Computability}.
\newblock Cambridge University Press, Cambridge, 1997.

\bibitem[Pos41]{pos41}
E.~Post.
\newblock The two-valued iterative systems of mathematical logic.
\newblock {\em Annals of Mathematical Studies}, 5:1--122, 1941.

\bibitem[Rei80]{reiter:80}
R.~Reiter.
\newblock A logic for default reasoning.
\newblock {\em Artificial Intelligence}, 13:81--132, 1980.

\bibitem[SI90]{sriy90}
R.~Sridhar and S.~Iyengar.
\newblock Efficient parallel algorithms for functional dependency
  manipulations.
\newblock In {\em Proc.\ 2nd International Symposium on Databases in Parallel
  and Distributed Systems}, pages 126--137. ACM, 1990.

\bibitem[Sti90]{stillman:90}
J.~Stillman.
\newblock It's not my default: The complexity of membership problems in
  restricted propositional default logics.
\newblock In {\em Proc.\ 8th Conference on Artificial Intelligence}, pages
  571--578. ACM, 1990.

\bibitem[Vol99]{vol99}
H.~Vollmer.
\newblock {\em Introduction to Circuit Complexity -- A Uniform Approach}.
\newblock Texts in Theoretical Computer Science. Springer Verlag, Berlin
  Heidelberg, 1999.

\end{thebibliography}

\end{document}